\newtheorem{theorem}{Theorem}
\newtheorem{lemma}[theorem]{Lemma}
\newtheorem{corollary}[theorem]{Corollary}
\newenvironment{proof}{\textbf {Proof:}}{\hfill $\boxtimes$}
\newcommand{\aff}{\mathop{\mathrm{aff}}}
\newcommand{\new}[1]{#1}
\title{Bounds on the Complexity of Halfspace Intersections when the Bounded Faces have Small Dimension}
\author{
David Eppstein\thanks{Computer Science Dept, University of California, Irvine. \url{eppstein@uci.edu}}
\and
Maarten L\"offler\thanks{Computer Science Dept, University of California, Irvine. \url{mloffler@uci.edu}}
}
\begin{document}

\maketitle
\begin{abstract}
We study the combinatorial complexity of $D$-dimensional polyhedra defined as the intersection of $n$ halfspaces, with the property that the highest dimension of any bounded face is much smaller than $D$.
We show that, if $d$ is the maximum dimension of a bounded face, \new{then} the number of vertices of the polyhedron is $O(n^d)$ and the total number of bounded faces of the polyhedron is $O(n^{d^2})$. For inputs in general position the number of bounded faces is $O(n^d)$. 
For any fixed $d$, we show how to compute the set of all vertices, how to determine the maximum dimension of a bounded face of the polyhedron, and how to compute the set of bounded faces in polynomial time, by solving a polynomial number of linear programs.
\end{abstract}

\section{Introduction}

Bounds on the complexity of halfspace intersections and convex hulls~\cite{McM-Mka-70,Sei-CGTA-95}, and algorithms for constructing halfspace intersections and convex hulls~\cite{AviFuk-DCG-92,BarDobHuh-TOMS-96,ChaKap-JACM-70,Cha-DCG-93,Dye-MOR-83,MatRub-MOR-80,Sei-STOC-86,Swa-Algs-85}, have long been a mainstay in discrete and computational geometry. However, because of their inherent exponential dependence on the dimension of the input, these worst case bounds are only useful for polytopes of low to moderate dimension. In contrast, linear programming allows single vertices of halfspace intersections to be found numerically in polynomial time even for high-dimensional inputs~\cite{Kar-Comb-84,Kha-SMD-79,SpiTen-JACM-04}. Is there an intermediate family of polytope construction problems, with low complexity even for high-dimensional input, but capable of \new{describing} more complex sets of features than a single polytope vertex?

\eenplaatje [scale=1.5] {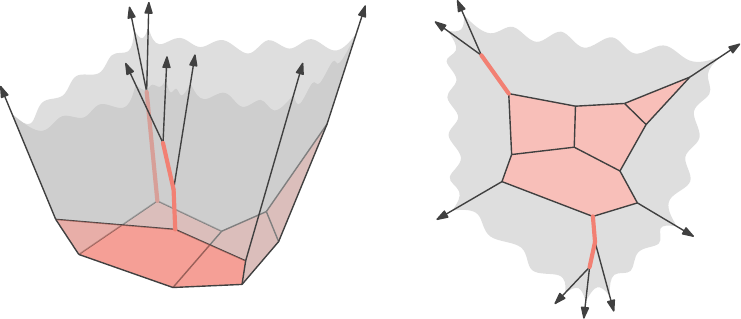} {An unbounded 3-dimensional polyhedron, with a bounded subcomplex consisting of four 2-dimensional and three 1-dimensional maximal faces. On the right we show the projection onto a plane that is perpendicular to the unbounded direction.}

Given the obstacles that have been identified to finding a truly output-sensitive halfspace intersection algorithm~\cite{AviBreSei-CGTA-97}, it may not be sufficient to ask for an output-sensitive algorithm that works efficiently whenever an input polytope has low combinatorial complexity. Instead, in this paper, we identify a different parameter of instance complexity that does lead both to low complexity and efficient algorithms: the dimension of the bounded faces of a halfspace intersection.
\new {Figure~\ref {fig:intro} shows an example polyhedron whose bounded faces have been marked.}
These faces form a combinatorial structure that can be described equivalently in several ways:
\begin{itemize}
\item The \emph{bounded subcomplex} of an intersection of $n$ halfspaces is the cell complex formed by the set of bounded faces of the intersection.
\item The subset of faces of a convex polytope that lie above some linear threshold is equivalent to a bounded subcomplex under a projective transformation that takes the threshold hyperplane to the hyperplane at infinity. \new{A transformation of this type necessarily transforms the faces of the polytope that are entirely above the threshold into bounded faces of the transformed polyhedron, but the faces that cross the threshold (if any exist) become unbounded.}
\item The subset of faces of a convex polytope that are disjoint from some specified facet is again equivalent \new{to a thresholded intersection or a bounded subcomplex}. To see this, one may consider a linear threshold determined by a linear function that is zero on the specified facet, positive on the points of the polytope disjoint from the facet, and negative on the other side of the facet.
\item The complexes considered above are equivalent under projective duality to the subset of faces of a convex polytope that are nonadjacent to some specified vertex.
\end{itemize}
We show that, when the bounded subcomplex consists only of faces of dimension at most $d$, where $d$ is significantly smaller than the ambient dimension, then it has smaller combinatorial complexity than the bounds given by the upper bound theorem. Specifically, it has $O(n^d)$ vertices, and $O(n^{d^2})$ faces. \new{More strongly,} when the halfspaces forming the intersection are in general position, we obtain a tighter bound of $O(n^d)$ on the number of faces as well. Based on these combinatorial bounds, we provide algorithms for listing all the vertices or faces of the bounded subcomplex in polynomial time whenever $d$ is a fixed constant.

\new{\section{Motivating Applications and Related Work}\label{sec:motive}}
One motivation for studying bounded subcomplexes comes from the \emph{tight span} construction, a canonical method of embedding any metric space into a continuous space with properties similar to those of $L^\infty$ spaces~\cite{ChrLar-JoA-94,Dre-AiM-84,Isb-CMH-64}. One way of defining the tight span, for a \new{finite} metric space with $n$ points $p_i$, is \new{to coordinatize $n$-dimensional $L^\infty$ space by $n$ variables $x_i$, and to define a polyhedral subset of the space by the \new{$\binom{n}{2}$} linear inequalities
\[
x_i+x_j\ge \mathop{\mathrm{dist}}(p_i,p_j)
\]
for each possible pair $(i,j)$. Then, the tight span is the bounded subcomplex of this polyhedron.}
For metric spaces satisfying an appropriate general position assumption, the dimension of the bounded subcomplex is between $\lceil n/3\rceil$ and $\lfloor n/2\rfloor$~\cite{Dev-AoC-06}, but certain combinatorially defined metrics, such as the metrics of distances on certain classes of planar graphs, can have tight spans of much lower dimension~\cite{Epp-TAA-09}. Our results bound the complexity of these low-dimensional tight spans and allow them to be constructed efficiently, generalizing our previous algorithms for constructing tight spans \new{when they are} homeomorphic to subsets of the plane~\cite{Epp-09}.

Our results can also be interpreted as statements about the complexity of Delaunay triangulations for inputs satisfying strong convex position assumptions. Delaunay triangulations are closely related to convex hulls: the Delaunay triangulation is combinatorially equivalent to the convex hull of a point set lifted to a sphere in one higher dimension, augmented by an extra point at the pole of the sphere, followed by the removal of all faces incident to that pole~\cite{Bro-IPL-79}.  If \new{a $D$-dimensional set of $n$ points has the property that every interior point of the convex hull of  the set belongs to a Delaunay triangulation feature of dimension at least $D-d$, then our results imply via this lifting relation that the Delaunay triangulation has $O(n^d)$ $D$-dimensional simplices. For instance, if the convex hull of a point set is a stacked polytope (a convex figure formed by gluing simplices facet-to-facet) and the Delaunay triangulation coincides with the gluing pattern of the simplices, then every interior point of the convex hull belongs either to one of the simplices or to one of the glued facets, so $d=1$; in this case, the number of simplices in the Delaunay triangulation is exactly $n-D$.}

Additionally, many combinatorial optimization problems such as shortest path trees, minimum spanning trees, bipartite minimum weight perfect matchings, and network flows can be expressed as linear programs, and the vertices of the corresponding polyhedra that have the $k$ smallest values (according to the linear objective function used for the problem) correspond to the best $k$ solutions of these optimization problems. From this point of view, it is of interest to understand the combinatorial behavior of the subsets of polyhedron vertices and faces satisfying a linear threshold constraint, or equivalently of the bounded faces of polyhedra. As a somewhat trivial example (easy to prove directly without resorting to our polyhedral point of view), consider the bin packing problem in which the input consists of a set of items with weights $w_i$ and a capacity $W$, and a solution consists of a subset of the items with total weight at most $W$. If $x_i$ is an indicator variable for the inclusion of item $i$, then the solutions are the vertices of the hypercube $0\le x_i\le 1$ that satisfy the linear constraint
\new{\[\sum w_i x_i\le W.\]
The number of halfspaces defining the hypercube is $2n$, and the dimension $d$ of the bounded subcomplex of the hypercube determined by the linear constraint is the maximum number of items that can be packed into a single solution. Therefore, our results imply that the total number of solutions is $O(n^d)$.}

Bounded subcomplexes have also been investigated in other contexts. 
In tropical geometry, the \emph{tropical complex} defined by a set of points is also equivalent to the set of bounded faces of a polytope defined from the points~\cite{DevStu-DM-04}. Develin~\cite{Dev-DCG-04} studied the bounded faces of a certain polytope arising from a problem in algebra, and showed that they are all isomorphic to subpolytopes of permutohedra. Queyranne~\cite{Que-MP-93} used polytopes to model scheduling problems; the polytope defined by Queyranne has a unique bounded facet. In connection with the tight span application, Hirai~\cite{Hir-AC-06} showed that the bounded faces of a halfspace intersection form a contractable complex. Herrmann et al.~\cite{HerJosPfe-10} consider the problem of computing the bounded subcomplex of a polyhedron, but they do not bound the complexity of the complex. Their algorithms assume that the vertices and facets of the polyhedron are both already known, and are output-sensitive given this information.

\section{Definitions} \label {sec:defs}

Define a \emph{polytope} to be the convex hull $CH(S)$ of a finite set $S$ of points, and a \emph{polyhedron} to be the intersection $\cap T$ of a finite set $T$ of closed halfspaces. Every polytope is a polyhedron; a polyhedron is a polytope if and only if it is bounded.

The \emph{faces} of a polytope or polyhedron $P$ are the sets of the form $P\cap H$ where $H$ is a closed halfspace whose boundary is disjoint from the relative interior of $P$. Every face of a polytope is also a polytope, and every face of a polyhedron is also a polyhedron. The \emph{dimension} $\dim f$ of a face~$f$ is the dimension of its affine hull $\aff f$. With these definitions, $P$ itself is a face, as is the empty set. By convention the dimension of the empty set is~$-1$. Two faces are \emph{incident} if one is a subset of the other.
A face $f$ of a polytope $CH(S)$ may be identified with the set $f\cap S$, and a face $f$ of a polyhedron $\cap T$ may be identified with the set of halfspaces of $T$ whose boundary hyperplanes contain~$f$. This identification is one-to-one: each face is identified with a unique set. The subset relation between faces is either the same as or the reverse of the subset relation between the sets the faces are identified with, for polytopes and polyhedra respectively.

A \emph{vertex} of a polytope or polyhedron is a 0-dimensional face; it must be a singleton set, containing a single point. An \emph{edge} of a polytope or polyhedron is a bounded 1-dimensional face, consisting of a line segment connecting two vertices. Polyhedra may also have \emph{rays} and \emph{lines}, unbounded 1-dim\-en\-sion\-al faces with one or zero vertices on them respectively. If a polytope or polyhedron $P$ has dimension $D$, then a \emph{facet} of $P$ is a $(D-1)$-dimensional face. If $D$ is also the dimension of the space containing the points or halfspaces from which $P$ was defined, then $P$ is said to be \emph{full-dimensional}.
More generally we define \new{a} \emph{$d$-face} of $P$ to be a face of dimension $d$, so for instance a $1$-face may be an edge, a ray, or a line.

\eenplaatje [scale=1.5]{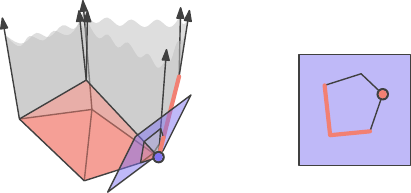} {The link of a vertex of $P$ is a polytope of dimension one lower than $P$.}

If $v$ is a vertex of a polytope or polyhedron $P$, let $h$ be a hyperplane that separates $v$ from all other vertices and all unbounded regions of $P$, and define the \emph{link} of $v$ to be the intersection $P\cap h$. \new {Figure~\ref {fig:link} shows an example.} Different choices of $h$ will give geometrically different intersections, but they are all related to each other by affine transformations, so as a combinatorial polytope the link is well defined; however, in our proofs we will need to refer to the link as a geometric object, defined as above, rather than as a combinatorial object, so in each such usage we will need to verify that the choice of $h$ does not affect the result. The faces of the link correspond one-for-one, in an incidence-preserving way, with the faces of $P$ that are incident to $v$.

A full-dimensional polytope or polyhedron is \emph{simple} if every vertex is incident to at most $D$ edges. Equivalently, a polytope or polyhedron is simple if the link of every vertex is a simplex. A polyhedron defined as the intersection of a set of halfspaces in general position (meaning that a small perturbation of any of the halfspaces does not change the combinatorial structure of the polyhedron) is necessarily a simple polyhedron. Dually, a polytope is \emph{simplicial} if all of its faces except for the polytope itself are simplices. A polytope defined as the convex hull of a set of points in general position is necessarily simplicial.

If $P$ is a polytope containing the origin, the \emph{polar polytope} of $P$ is the intersection of a system of halfspaces, one for each vertex of $P$: if $v$ is a vertex of $P$ then the corresponding halfplane is the set of points $w$ such that $v\cdot w\le 1$. Although defined as an intersection of halfspaces, the polar polytope is bounded and is therefore a polytope. The faces of the polar polytope correspond one-for-one with the faces of $P$ but in a dimension-preserving way: an $i$-dimensional face of $P$ corresponds to a $(D-i-1)$-dimensional face of the polar polytope. The polar polytope of the polar polytope is $P$ again. If $P$ is simple, its polar polytope is simplicial, and if $P$ is simplicial, its polar polytope is simple.

If $P$ is a polyhedron or polytope and $\ell$ is a linear function, we define $\max_\ell P$ and $\min_\ell P$ to be the face on which $\ell$ takes its minimum or maximum. If the minimum or maximum is unbounded, then we define $\max_\ell P$ or $\min_\ell P$ to be the empty set.
A polyhedron is \emph{pointed} if it has at least one vertex. If $\ell^0$ is a tangent hyperplane to a polyhedron $P$ at a given vertex $v$, that is not tangent to any higher-dimensional face, and $\ell$ is a linear function that is positive on the side of $\ell^0$ containing $P$ and negative on the other side, then $v=\min_\ell P$, so a polyhedron $P$ is pointed if and only if there exists a linear function $\ell$ for which $\min_\ell P$ is a single vertex.

A \emph{polyhedral complex} is a finite set $C$ of polyhedra, all in the same ambient space, such that $C$ contains each face of each polyhedron in $C$ and such that the intersection of any two polyhedra in $C$ is a face of both. We define a \emph{polytopal complex} to be a polyhedral complex in which each member of $C$ is a polytope.

If $P$ is a polytope or polyhedron, $\ell$ is a linear function, and $B$ is any real number, let $P^{\ell<B}$ be the polyhedral complex formed by the set of faces of $P$ such that, for every point $p$ of a face in $P^{\ell<B}$, $\ell(p)<B$.
Analogously, for $B=\infty$, define $P^{\ell<\infty}$ to be the set of faces of $P$ on which $\ell$ is bounded. (This notation directly implies only that $\ell$ is bounded from above, but we define it to mean that it is bounded from below as well.)
We define $\dim P^{\ell<B}$ to be the maximum dimension of a face in $P^{\ell<B}$.

\section{Examples}

\begin{figure}[t]
\centering\includegraphics[width=3.75in]{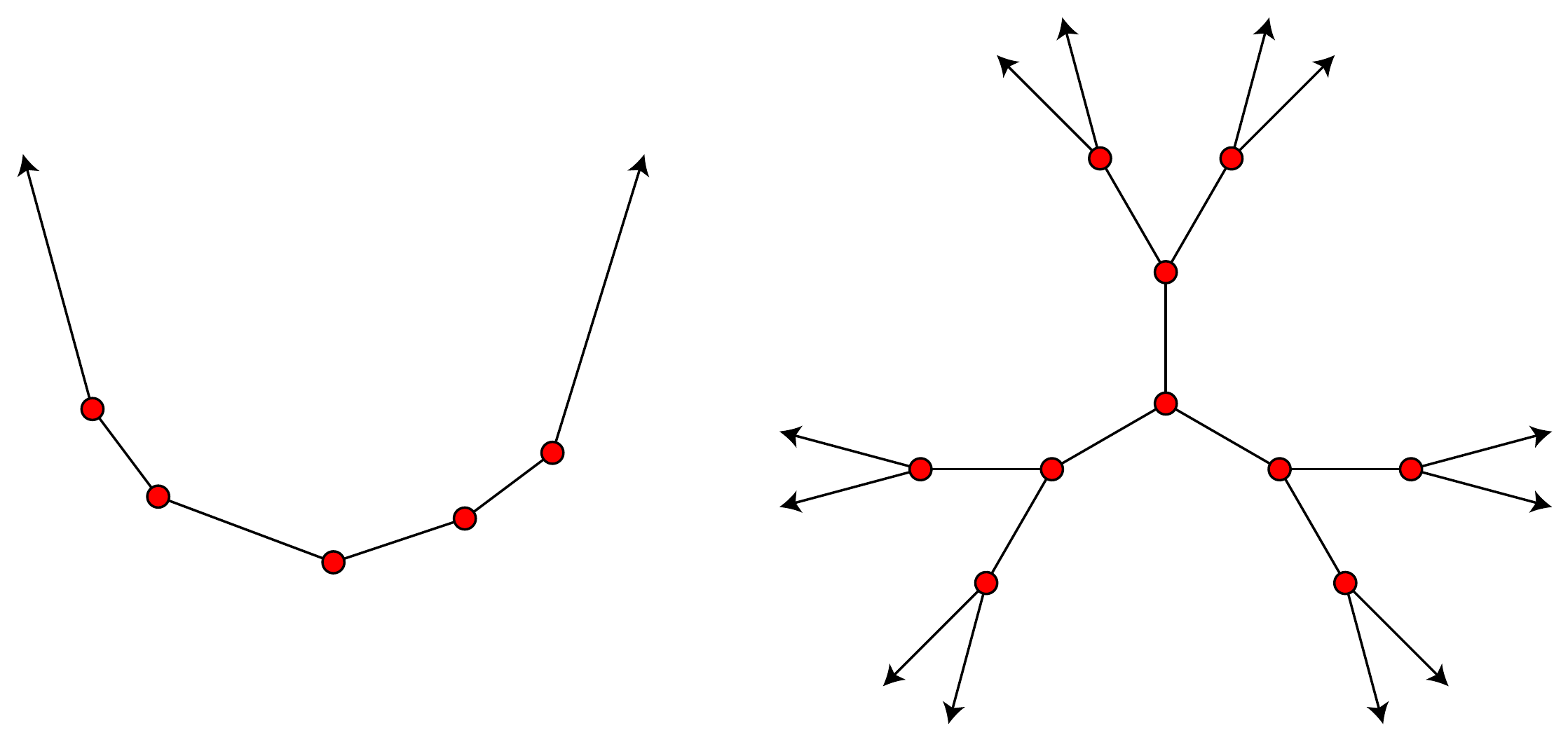}
\caption{Trees that may be realized as the bounded faces of two- and three-dimensional polyedra.}
\label{fig:tight-trees}
\end{figure}

\new{We begin with some examples of polyhedra with low-dimensional bounded faces. These are not so much intended as motivation for the bounded-face problem but rather as illustrations of the sort of behavior these polyhedra may have; for more in-depth motivating examples, see Section~\ref{sec:motive}.}

\begin{itemize}

\item Let $P$ be any $(D-1)$-dimensional polytope,
having facets determined by the linear inequalities $\bar a_i \cdot x\ge b_i$. Then the $D$-dimensional polyhedron defined by the linear inequalities $\bar a_i\cdot x - b_i y\ge 0$ is an unbounded cone over $P$: it has $P$ as its cross-section (say in the hyperplane $y=1$) but has only a single vertex at the origin. Examples of this form can be used to show that, even when $\dim P^{\ell<B}$ is small, the number of unbounded faces can be large. For instance, if $P$ is a hypercube, defined as the intersection of $n=2(D-1)$ facets, then the cone over $P$ has $3^{n/2}+2$ faces despite having only a single bounded vertex.

\item For $d=1$ and $D=2$, an unbounded two-dimensional polyhedron with $n$ facets always has exactly $n-1$ vertices (Figure~\ref{fig:tight-trees}). For $d=1$ and $D=3$, if $T$ is a tree forming the boundary edges and rays of a partition of the plane into $n$ unbounded polygons~\cite{CarEpp-GD-06}, with exactly three polygons meeting at each vertex, then $T$ may be lifted to an unbounded three-dimensional polyhedron with $n$ facets and $n-2$ vertices (Figure~\ref{fig:tight-trees}).

\item More generally, a $D$-dimensional polyhedron may have any free tree $T$ with vertex degree at most $D$ as its complex of bounded faces. To see this, let $P$ be a $(D-1)$-dimensional \emph{stacked polytope}, the union of a set of simplices glued facet-to-facet with one simplex for each vertex of $T$ and with two simplices glued together whenever the corresponding vertices of $T$ are adjacent. It can be shown by induction on the size of $T$ that there is always a way of choosing simplices to glue in this way such that their union is convex. $P$ can be lifted to a stacked polytope $P^+$ in $D$ dimensions, by adding the same new vertex $v$ to every simplex of $P$; again, $P^+$ can be constructed by gluing simplices in such a way that their union is convex. Let $Q^+$ be the polar polytope of $P^+$, let $Q$ be the polyhedron formed by intersecting all but one of the facet halfspaces of $Q^+$ (omitting the facet dual to $v$), and let $\ell$ be a linear function that is zero on the facet of $Q^+$ dual to $v$ and negative on $Q$. Then $Q^{\ell<0}$ is isomorphic to $T$. Note that, in defining $P$, all points may be chosen to be in general position, from which it follows that in $Q$, all halfspaces may be chosen to be in general position.

\begin{figure}[t]
\centering\includegraphics[width=2.5in]{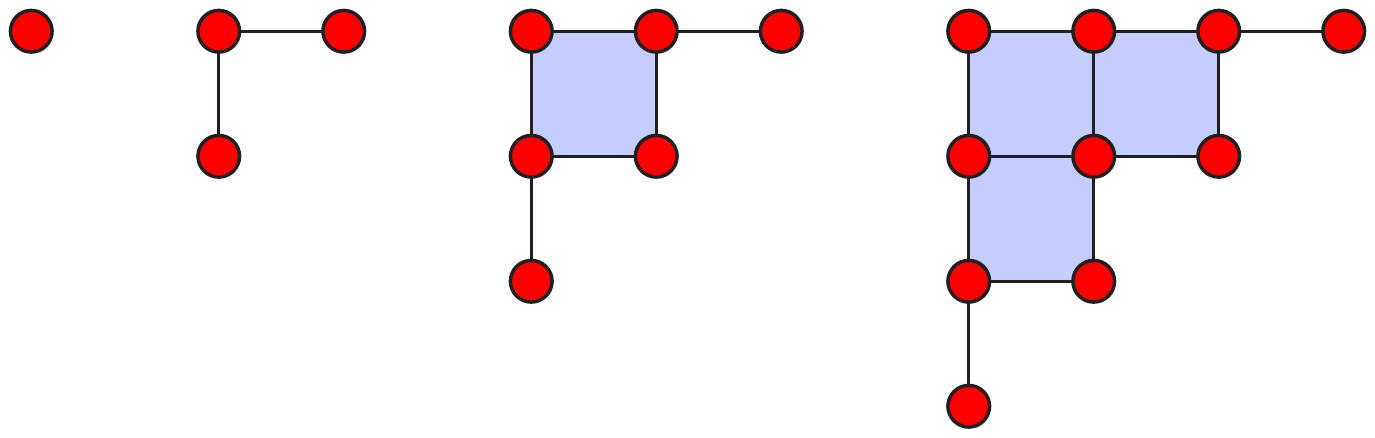}
\caption{The bounded subcomplex of the Voronoi diagram of $n$ points on the three-dimensional moment curve, for $4\le n\le 7$.}
\label{fig:moment-vor}
\end{figure}

\item Any $n$ points on the three-dimensional moment curve $(t,t^2,t^3)$ determine a Voronoi diagram with $\binom{n-2}{2}$ vertices~\cite{DewVra-UM-77}. A standard lifting transformation takes this Voronoi diagram to an unbounded four-dimensional polyhedron, with $n$ defining halfspaces. In this polyhedron, all facets are unbounded, so the maximum dimension of a bounded face is two. There are quadratically many bounded faces (Figure~\ref{fig:moment-vor}). As this example shows, a nontrivial bound on $d$ can still lead to examples with superlinear complexity.

\item Let $P$ be a four-dimensional pyramid over an octahedron, modified by folding its octahedral facet so that it is split into two square pyramids. Then the subset of faces of $P$ that are nonadjacent to the apex of $P$ are dual to a one-dimensional complex that takes the form of a single line segment: the two square pyramid facets are dual to the endpoints of the line segment, and the square ridge separating these two facets is dual to the line segment itself. However, if the vertices of $P$ are perturbed into general position, then the square becomes a flat tetrahedral facet while each of the two square pyramids is split into two tetrahedra; the dual complex becomes two triangles joined at a vertex. This example shows that some instances cannot be perturbed into general position without increasing the dimension of the bounded subcomplex.

\item Let $Q_D$ be the $D$-dimensional hypercube, formed by intersecting $n=2D$ halfspaces determined by the inequalities $0\le x_i$ and $x_i\le 1$, and let $\sigma$ be the linear function $\sum x_i$. Then, for any integer $0\le d\le D$, $\dim Q_D^{\sigma>D-d-1/2}=d$. The number of vertices of $Q_D^{\sigma>D-d-1/2}$ is $\sum_{i=0}^d \binom{D}{i}$. For any even integer $n$, and any constant value of~$d$, this provides an example of a bounded subcomplex in general position with $\Theta(n^d)$ vertices.

\end{itemize}

\section{Bounding the number of vertices}

In order to help prove our main result, we begin with a technical lemma stating that if we slice a polytope by a hyperplane in such a way that one side of the slice contains only faces of low dimension, then the other side of the slice contains a face of high dimension. \new {Figure~\ref {fig:tetraslices} illustrates the lemma.}

\eenplaatje [scale=1.5] {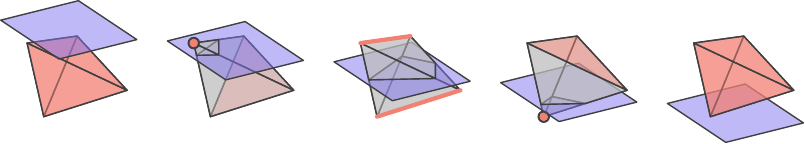} {No matter how we slice a tetrahedron with a plane, there is always a pair of faces (one of which could be the empty set) on both sides of the plane such that the affine hull of these faces is the full space. Lemma~\ref {lem:slice-basis} claims that this is still true for any polytope, in any dimension.}

\begin{lemma}
\label{lem:slice-basis}
Let $P$ be a polytope, and $\lambda$ be a hyperplane that does not pass through any vertex of $P$. Then there exist faces $f^+$ and $f^-$ of $P$, on opposite sides of $\lambda$, such that $\aff(f^+\cup f^-)=\aff P$.
\end{lemma}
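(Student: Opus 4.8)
The plan is to induct on the dimension of $P$, after the harmless reduction to the full-dimensional case: if $\aff P$ is a proper subspace, replace the ambient space by $\aff P$, noting that $\lambda\cap\aff P$ cannot be all of $\aff P$ (that would put a vertex of $P$ on $\lambda$) and so is again a hyperplane missing every vertex. Fix a linear function $\ell$ with $\lambda=\{\ell=0\}$, so the two sides of $\lambda$ are $\{\ell>0\}$ and $\{\ell<0\}$. The base case $D=0$ is immediate: $P$ is a single point on one side of $\lambda$, so take it as $f^+$ (or $f^-$) and the empty set as the other face.

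For the inductive step I would first clear away the easy configurations. If $\lambda$ misses $P$, then $P$ lies on one side and a face-together-with-$\emptyset$ works. Otherwise $\lambda$ meets $P$, hence crosses the interior of $P$ (a face it merely touched would contain a vertex, which $\lambda$ avoids). Now look at the facets of $P$. If some facet $g$ lies entirely on one side, say the positive one, take $f^+=g$ and $f^-=\{w\}$ for an arbitrary vertex $w$ on the negative side: since $w$ is a vertex of $P$ not lying in the facet $g$, it lies strictly on the interior side of the supporting hyperplane $\aff g$, so $w\notin\aff g$, and hence $\aff(g\cup\{w\})=\mathbb{R}^{D}$.

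The remaining, and principal, case is that every facet of $P$ is crossed by $\lambda$ (the hexagonal central cross-section of a cube realizes this). Pick such a facet $g$, and apply the inductive hypothesis to the $(D-1)$-dimensional polytope $g$ and the hyperplane $\lambda\cap\aff g$ of $\aff g$ — which misses the vertices of $g$ and separates some of them — to obtain faces $h^+$ and $h^-$ of $g$, and hence of $P$, on opposite sides of $\lambda$ with $\aff(h^+\cup h^-)=\aff g=:H$. To reach the conclusion it then suffices to produce a single ``boosting'' face: a face $f$ of $P$ with $h^+\subseteq f$ that lies on the positive side and is not contained in $H$, or the mirror statement for $h^-$ and the negative side. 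Such an $f$ satisfies $\aff(f\cup h^\mp)\supseteq\aff(h^+\cup h^-)=H$ while also containing a point off $H$, so its affine hull is all of $\mathbb{R}^{D}$, as required.

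The step I expect to be the real obstacle is proving that a boosting face exists on at least one of the two sides. I would argue by contradiction: if neither side can be boosted, then every positive face of $P$ containing $h^+$, and every negative face containing $h^-$, is contained in $g$. Because $g$ is a facet of the full-dimensional polytope $P$, it cannot contain all of $P$'s vertices, so some vertex lies off $g$; combining this with the standard fact that the vertices of $P$ lying in an open halfspace span a connected subgraph of the edge graph of $P$, one finds an edge of $P$ leaving $g$ and — after choosing $g$ (and, if necessary, the recursive faces $h^+,h^-$) with some care — attached to $h^+$ or $h^-$, giving a boosting face and the desired contradiction. Pinning down exactly how much care is needed here is where the work lies; the rest — checking that each intersection in the argument is a genuine face, that the dimension count goes through, and that the ``opposite sides'' hypothesis is inherited in the recursive step — is routine.
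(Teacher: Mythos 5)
Your reduction to the full-dimensional case, your base case, and your two easy cases (the hyperplane missing $P$, or some facet lying entirely on one side) are all fine. But the principal case contains a genuine gap, and you have located it yourself: the existence of the ``boosting'' face is exactly the crux of the lemma, and it is left as a sketch with the admission that you do not know how much care is needed in choosing $g$, $h^+$, $h^-$. The contradiction argument you outline does not close it. The connectivity of the positive-side vertices in the edge graph gives you an edge leaving $\aff g$ that is attached to \emph{some} positive vertex, but a boosting face must \emph{contain all of} $h^+$ (otherwise $\aff(f\cup h^-)$ need not contain $\aff(h^+\cup h^-)$), and the smallest face containing both $h^+$ and such an edge may well be crossed by $\lambda$. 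Indeed one side can genuinely fail: for a tetrahedron $abcd$ with $a,b$ positive and $c,d$ negative, taking $g=abc$ forces $h^+=ab$, $h^-=\{c\}$, and no positive face containing the edge $ab$ leaves the plane of $g$ (the only candidates, $abd$ and $P$, are crossed). So your claim must be that \emph{at least one} side can always be boosted after a suitable choice of $g$ and of the recursive faces, and that statement is precisely what is not proved; as written, the argument is incomplete at its central step.

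For comparison, the paper's proof avoids this obstacle entirely by recursing not into a facet cut by $\lambda$ but into the \emph{link} of the vertex $v$ that is lowest on the positive side, with the cut taken at the shifted threshold $\ell(v)$ rather than at $0$. The inductively obtained ``positive'' face of the link lifts to a face of $P$ lying in $P^+$ (all its values of $\ell$ are at least $\ell(v)>0$), while the ``negative'' face lifts to a face $h$ of $P$ whose vertices other than $v$ are strictly negative, from which one takes a facet of $h$ avoiding $v$; adjoining $v$ then restores the lost dimension, since $\aff(L\cup v)=\aff P$. If you want to salvage your facet-based recursion, you would need to prove the boosting claim (with the freedom of choosing $g$, $h^+$, $h^-$), which appears to require a new idea rather than routine bookkeeping; switching to the link-at-threshold recursion is the cleaner repair.
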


\begin{proof}
We use induction on dimension.
\new {Note that the empty set is also a face of $P$, so $\lambda$ need not intersect $P$.}
Let $\ell$ be a linear function that is zero on $\lambda$, let $P^+=P^{\ell>0}$ be the complex of faces of $P$ on which $\ell$ is entirely positive, and let $P^-=P^{\ell<0}$ be the complex of faces of $P$ on which $\ell$ is entirely negative. 
As a base case, if $P^+$ is empty, then we may take $f^+=\emptyset$ and $f^-=P$. 
As a second base case, if $P^+$ is a single vertex $v$, then we may take $f^+=v$ and $f^-$ to be any facet of $P$ disjoint from $v$; such a facet must exist, for otherwise $P$ would be unbounded.

We may assume without loss of generality (by perturbing $\lambda$, if necessary) that no hyperplane parallel to $\lambda$ passes through two or more vertices of $P$. 
By the assumption of general position on $\lambda$, $\ell$ does not take the same value on any two vertices of $P$. Let $v$ be the vertex in $P^+$ minimizing $\ell$, and let $L$ be the link of $v$ (with an arbitrary choice of the intersecting hyperplane $\mu$ defining the link). Let $L^+=L^{\ell>\ell(v)}$ and $L^-=L^{\ell<\ell(v)}$. By the induction hypothesis,
\new {applied to the polytope $L$,}
we may find faces $g^+$ and $g^-$ in $L^+$ and $L^-$ respectively, such that $\aff(g^+\cup g^-)=\aff L$. (When $P$ is full-dimensional, this affine hull is just the hyperplane $\mu$.) The face $g^+$ of $L^+$ corresponds to a face $f^+$ of $P$ in which all points have a value of $\ell$ larger than or equal to $\ell(v)$; since $\ell(v)$ is strictly positive, $f^+$ belongs to $P^+$. However, the face $g^-$ of $L^-$ corresponds to a face $h$ of $P$ such that $v$ lies in $P^+$ and the remaining vertices of $h$ lie in $P^-$. Let $f^-$ be any facet of $h$ disjoint from $v$, as in the second base case.

Then \new{\[g^+=(f^+\cap\mu)\subset f^+\subset\aff(f^+\cup f^-).\]
Additionally, $v\in f^+$, so \[g^-=(h\cap\mu)\subset h\subset\aff(v\cup f^-)\subset\aff(f^+\cup f^-),\] and
\[\aff P=\aff(L\cup v)=\aff(g^+\cup g^-\cup v)\subset\aff(f^+\cup f^-).\]} But $f^+\cup f^-$ is a subset of $P$, so its affine hull cannot be a proper superset of the affine hull of $P$, and the two affine hulls are equal, as desired.
\end{proof}

As we now show, the vertices of a polyhedron with low-dimensional bounded faces can be associated to small sets of facets of the polyhedron, leading to polynomial bounds on the number of vertices of the polyhedron.
\new {Figure~\ref {fig:dmin} illustrates the lemma for $3$-dimensional polyhedra.}

\eenplaatje [scale=1.5] {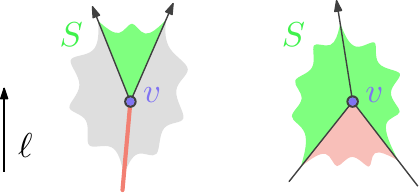} {A vertex $v$ of a 3-dimensional polyhedron has at least three incident edges. On the left, two edges point "up" (in the direction of $\ell$), so $v$ is the minimum of the face between them: $|S| = 1$. In this case, the only face that is necessarily bounded is the edge that points down (which has dimension $1$). On the right, only one edge points up, so $v$ is the minimum of the intersection of the two faces on both sides of it: $|S| = 2$. In this case, however, at least two edges point down, so the face between them is necessarily bounded (which has dimension $2$).}

\begin{lemma}
\label{lem:vertex-set-corr}
Let $P$ be a full-dimensional pointed polyhedron, defined as the intersection of $n$ $D$-dimensional halfspaces. Let $\ell$ be a linear function with a vertex of $P$ as its minimum, and suppose that $\ell$ does not take the same value at any two different vertices of $P$. Let $B$ be \new {a} given \new {real number}, and let $d=\dim P^{\ell<B}$.  Let $v$ be any vertex of $P^{\ell<B}$. Then there exists a set $S$ of $d$ of the halfspaces defining $P$, such that $v=\min_\ell(P\cap A)$, where $A$ is the intersection of the boundary hyperplanes of $S$.
\end{lemma}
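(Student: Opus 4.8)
The plan is to study the vertex $v$ through its link $L$, with Lemma~\ref{lem:slice-basis} as the main tool. First I would choose the hyperplane $\mu$ defining $L$ so that $\ell$ is not constant on $\mu$; this is possible since $\ell$ is not globally constant (if it were, $\min_\ell P$ would be all of $P$ rather than a single vertex). Then $L$ is a $(D-1)$-dimensional polytope with $\aff L=\mu$, whose faces correspond bijectively and incidence-preservingly with the faces of $P$ incident to $v$, raising dimension by one. Hence $\lambda:=\mu\cap\{x:\ell(x)=\ell(v)\}$ is a genuine hyperplane of $\mu$, and I would check that it meets no vertex of $L$: a vertex of $L$ lies on an edge of $P$ incident to $v$, and $\ell$ can never be constant on such an edge --- on a bounded edge this contradicts the assumption that $\ell$ is injective on the vertices of $P$, and on a ray it contradicts $\min_\ell P$ being a single vertex, since that forces $\ell$ to be strictly positive on the entire recession cone of $P$.

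Next I would apply Lemma~\ref{lem:slice-basis} to $L$ and $\lambda$, obtaining faces $g^+$ and $g^-$ of $L$ on opposite sides of $\lambda$ with $\aff(g^+\cup g^-)=\mu$; thus $\ell>\ell(v)$ on $g^+$ and $\ell<\ell(v)$ on $g^-$ (one of the two may be empty). Let $f^+$ and $h$ be the faces of $P$ incident to $v$ corresponding to $g^+$ and $g^-$, so $\dim f^+=\dim g^++1$ and $\dim h=\dim g^-+1$. A short segment-chasing argument --- any point $x\neq v$ of a face incident to $v$ is joined to $v$ by a segment in that face crossing $\mu$ at a point $w$, with $x$ equal to $w$ or lying beyond it on the ray from $v$, so $\ell(x)=\ell(v)+s(\ell(w)-\ell(v))$ for some $s\ge 1$ --- then shows $\ell>\ell(v)$ on $f^+\setminus\{v\}$ and $\ell<\ell(v)$ on $h\setminus\{v\}$. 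In particular $v=\min_\ell f^+$; and since $\ell(v)<B$, the second inequality gives $\ell<B$ everywhere on $h$, so $h$ is a face of $P^{\ell<B}$ and hence $\dim h\le d$. Combining $D-1=\dim\aff(g^+\cup g^-)\le\dim g^++\dim g^-+1$ with the two dimension shifts yields $\dim f^++\dim h\ge D$, hence $\dim f^+\ge D-d$.

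It then remains to convert the ``uphill'' face $f^+$ into the required set $S$. Translating so that $v=0$, I would write the tangent cone of $P$ at $v$ as $K=\{z:\bar a_i\cdot z\ge 0,\ i\in I_v\}$, where $I_v$ indexes the halfspaces whose boundary hyperplanes pass through $v$; since $v$ is a vertex of a full-dimensional polyhedron, $|I_v|\ge D$ and the normals $\bar a_i$ ($i\in I_v$) span the whole space. The face $f^+$ of $P$ corresponds to the face $F^+=\operatorname{cone}(f^+-v)=K\cap\bigcap_{i\in J}\{\bar a_i\cdot z=0\}$ of $K$, where $J=\{i\in I_v:\bar a_i\cdot z=0\text{ on all of }F^+\}$, and full-dimensionality of $P$ makes $\dim F^+=\dim f^+$ equal to $D-r$ with $r=\dim\operatorname{span}\{\bar a_i:i\in J\}$, so $r\le d$. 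I would then pick $S\subseteq J$ with $|S|=r$ and $\{\bar a_i:i\in S\}$ a basis of that span, so that the intersection $A$ of the boundary hyperplanes of $S$ satisfies $A-v=\bigcap_{i\in J}\{\bar a_i\cdot z=0\}$ and hence the tangent cone of $P\cap A$ at $v$ is exactly $v+F^+$. Since $\ell>\ell(v)$ on $f^+\setminus\{v\}$, the function $\ell-\ell(v)$ is nonnegative on $F^+$ and vanishes only at the apex, so $v$ is a strict local --- and therefore, by convexity of $P\cap A$, the unique global --- minimizer of $\ell$ on $P\cap A$; that is, $v=\min_\ell(P\cap A)$ with $|S|=r\le d$. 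Finally I would pad $S$ with arbitrary further halfspaces through $v$ (at least $D\ge d$ are available) until $|S|=d$, which only replaces $A$ by a smaller affine subspace still containing $v$ and so cannot spoil the property that $v$ is the unique $\ell$-minimizer of $P\cap A$.

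The part that demands the most care --- rather than any genuine difficulty --- is the setup for invoking Lemma~\ref{lem:slice-basis}: arranging that $\lambda$ is a true hyperplane missing every vertex of $L$, which is where the choice of $\mu$, the injectivity of $\ell$ on vertices, and the recession-cone consequence of the vertex-minimum hypothesis all get used. The degenerate configurations --- $v$ an extremum of $\ell$ on $P$, equivalently $g^+$ or $g^-$ empty --- need no separate treatment: the dimension count still yields $\dim f^+\ge D-d$, and the tangent-cone construction then collapses correctly, to $S=\emptyset$ when $v=\min_\ell P$, and to $d=D$ with $S$ a maximal independent family of hyperplanes through $v$ when $v$ is the $\ell$-maximum (which forces $P$ to be bounded with $\ell<B$ throughout).
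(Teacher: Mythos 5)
Your proof is correct and follows essentially the same route as the paper's: slice the link of $v$ by the level set of $\ell$ at $\ell(v)$, invoke Lemma~\ref{lem:slice-basis} to get an ``uphill'' and a ``downhill'' face whose dimensions sum to at least $D$, bound the downhill face's dimension by $d$ via $\dim P^{\ell<B}$, and read off the at most $d$ boundary hyperplanes from the uphill face (you just make explicit the steps the paper asserts, including the padding to exactly $d$ halfspaces). The only quibble is the ``$s\ge 1$'' in your segment-chasing claim --- points of a face strictly between $v$ and $\mu$ give $0<s<1$ --- but since only $s>0$ is needed for the sign conclusion, this is harmless.
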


\begin{proof}
Define a linear function $\ell_v$ by $\ell_v(x)=\ell(x)-\ell(v)$. Then $\ell_v$ differs from $\ell$ by a simple translation; \new{therefore,} it has the same maxima and minima \new{as~$\ell$} on each face of $P$. Let \new {$V$ be the link of $v$, let} $V^+$ be the complex of faces of \new {$V$} on which $\ell_v$ is positive, and let $V^-$ be the complex of faces of \new {$V$} on which $\ell_v$ is negative; this definition of $V^+$ and $V^-$ does not depend on the intersecting hyperplane used to define the link geometrically.

By Lemma~\ref{lem:slice-basis}, we can find faces $f^+$ and $f^-$ in $V^+$ and $V^-$ respectively, such that 
\new {$\aff(f^+\cup f^-)=\aff V$. Since $\dim V = \dim P - 1 = D - 1$, we have}
$\dim f^+ +\dim f^- \new {\geq} D-2$.
Then these two faces \new{$f^+$ and $f^-$} correspond to faces $g^+$ and $g^-$ in $P$, such that
$v=\min_{\ell} g^+=\max_{\ell} g^-$, and such that $\dim g^+ +\dim g^- \new {\geq} D$.
But we know that $\dim g^-\le d$ by assumption, so $\dim g^+\ge D-d$. Therefore, $g^+$ may be represented as the intersection of $P$ with a set of the boundary hyperplanes of at most $d$ of the defining halfspaces of $P$. If $A$ is the intersection of these hyperplanes, then $v=\min_\ell g^+=\min_\ell(P\cap A)$ as desired.
\end{proof}

\begin{theorem}
\label{thm:num-vertices}
Let $P$ be a full-dimensional pointed polyhedron, defined as the intersection of $n$ $D$-dimensional halfspaces, let $\ell$ be a linear function which attains a minimum on $P$, let $B$ be given and greater than the minimum value of $\ell$, and suppose that $\dim P^{\ell<B}=d<D$. Then the number of vertices of $P^{\ell<B}$ is at most
$$\binom{n}{d}-\binom{D}{d}+1.$$
\end{theorem}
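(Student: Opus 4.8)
The plan is to turn Lemma~\ref{lem:vertex-set-corr} into an injection from the vertices of $P^{\ell<B}$ into the family of $d$-element subsets of the $n$ halfspaces defining $P$, and then to observe that a block of $\binom{D}{d}$ of these subsets is entirely absorbed by the single lowest vertex of $P$; this last observation is what sharpens the trivial bound $\binom{n}{d}$ to $\binom{n}{d}-\binom{D}{d}+1$.

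Before invoking the lemma I would reduce to its hypotheses, namely that $\ell$ takes pairwise distinct values at the vertices of $P$ and that $\min_\ell P$ is a single vertex. Since $P$ is pointed there is a linear function $m$ with $\min_m P$ a single vertex, and such an $m$ is strictly positive on the recession cone of $P$ away from the origin; hence $\ell+tm$ is still bounded below on $P$ for every $t>0$, and for a small generic $t$ it is injective on the vertices of $P$ and so has a unique minimizing vertex. Passing from $\ell$ to $\ell+tm$ and replacing $B$ by a threshold slightly below it keeps the threshold above the minimum of the new objective and keeps every vertex of the original $P^{\ell<B}$ among the vertices of the new complex, so the vertex count does not decrease. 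I expect the most delicate point to be checking that this perturbation leaves the dimension $d$ of the bounded subcomplex unchanged, rather than merely bounded by $d$: the quantity $\binom{n}{k}-\binom{D}{k}$ is not monotone in $k$, so a mere drop in dimension would not automatically yield the stated bound. Everything past the reduction is purely combinatorial.

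So assume $\ell$ satisfies the hypotheses of Lemma~\ref{lem:vertex-set-corr}. For each vertex $v$ of $P^{\ell<B}$ the lemma supplies a set $S_v$ of exactly $d$ defining halfspaces such that $v=\min_\ell(P\cap A_v)$, where $A_v$ is the intersection of the boundary hyperplanes of $S_v$. The map $v\mapsto S_v$ is injective, since $S_v$ determines $A_v$ and therefore determines the vertex $\min_\ell(P\cap A_v)=v$; so $P^{\ell<B}$ has at most $\binom{n}{d}$ vertices. For the sharper bound, let $v_0=\min_\ell P$, which is a single vertex by the reduction and lies in $P^{\ell<B}$ because $B$ exceeds the minimum value of $\ell$. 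Since $P$ is full-dimensional, $v_0$ is the common point of the boundary hyperplanes of some set $I$ of $D$ of the defining halfspaces. For every $d$-element subset $S\subseteq I$, the intersection $A$ of the corresponding boundary hyperplanes contains $v_0$, and since $v_0$ minimizes $\ell$ over all of $P$ and hence over $P\cap A\subseteq P$, we get $\min_\ell(P\cap A)=v_0$. Thus no vertex other than $v_0$ can have $S_v\subseteq I$, while $v_0$ itself accounts for at most one of the $\binom{D}{d}$ subsets of $I$; so the image of $v\mapsto S_v$ omits at least $\binom{D}{d}-1$ of the $\binom{n}{d}$ possible $d$-subsets. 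By injectivity the number of vertices of $P^{\ell<B}$ is therefore at most $\binom{n}{d}-\binom{D}{d}+1$. It remains to sanity-check the degenerate cases $d=0$ and $n=D$, where the claimed bound equals $1$ and Lemma~\ref{lem:vertex-set-corr} indeed forces every vertex of $P^{\ell<B}$ to coincide with $v_0$.
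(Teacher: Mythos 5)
Your argument is in substance the paper's own proof: use Lemma~\ref{lem:vertex-set-corr} to assign each vertex of $P^{\ell<B}$ a $d$-set of defining halfspaces, note that the set determines the vertex as $\min_\ell(P\cap A)$, and recover the correction term by observing that all $\binom{D}{d}$ $d$-subsets of the at least $D$ boundary hyperplanes through $\min_\ell P$ determine that single vertex; the paper applies the lemma directly, leaving implicit the genericity reduction you spell out. The one point you flag as unresolved --- that the perturbation must preserve $\dim P^{\ell<B}$ \emph{exactly} --- is in fact unnecessary, so the non-monotonicity of $\binom{n}{k}-\binom{D}{k}$ is moot. It suffices that the perturbed dimension $d'$ not exceed $d$: the lemma then gives each vertex a set of at most $d'\le d$ halfspaces, and any such set can be padded to exactly $d$ halfspaces whose boundaries pass through $v$ (every vertex of a full-dimensional $D$-polyhedron lies on at least $D>d$ such boundaries); adding hyperplanes through $v$ only shrinks $P\cap A$ while keeping $v$ in it, so the minimum face can only shrink and $v=\min_\ell(P\cap A)$ persists --- this padding is precisely the ``augment to exactly $d$'' step in the paper's proof, and it keeps your injection into $d$-subsets intact. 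Non-increase of the dimension is easy to arrange: fix, for each of the finitely many faces of dimension greater than $d$, a witness point where $\ell\ge B$, and choose $t>0$ and $B'<B$ so that $\ell+tm\ge B'$ at every witness while every vertex of the original $P^{\ell<B}$ still satisfies $\ell+tm<B'$. With that observation your proposal is complete and matches the paper's route.
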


\begin{proof}
Lemma~\ref{lem:vertex-set-corr} gives a formula $v=\min_\ell(P\cap A)$ by which each vertex $v$ of $P$ may be obtained from a subset of at most $d$ of the defining halfspaces of $P$. We may augment any set of halfspaces determining $v$ in this way so that it has exactly $d$ halfspaces, by making an arbitrary choice of a sufficient number of the other halfspaces whose boundaries are incident to $v$. There are $\binom{n}{d}$ sets of exactly $d$ halfspaces, each one determines at most one vertex,
and all vertices can be determined in this way, but some vertices may be duplicated. In particular the vertex $\min_\ell P$ belongs to at least $D$ facets, and is the vertex that is determined by each of the $\binom{D}{d}$ different $d$-tuples of these $D$ halfspaces, giving rise to the correction terms in the formula.
\end{proof}

\begin{corollary}
\label{cor:asymptotic-vertices}
Let $P$ be a polyhedron, defined as the intersection of $n$ $D$-dimensional halfspaces, let $\ell$ be a linear function which attains a minimum on $P$, let $B$ be given, and suppose that  $\dim P^{\ell<B}=d<D$. Then the number of vertices of $P^{\ell<B}$ is $O(n^d)$.
\end{corollary}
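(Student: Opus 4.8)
The corollary is essentially an asymptotic restatement of Theorem~\ref{thm:num-vertices}, but it is phrased for an arbitrary polyhedron $P$ rather than a full-dimensional pointed one, so the plan is to reduce the general case to the hypotheses of the theorem by a sequence of routine normalizations.

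\textbf{Reduction to the full-dimensional case.} First I would handle the possibility that $P$ is not full-dimensional. If $\dim P = D' < D$, then $\aff P$ is itself a $D'$-dimensional affine subspace, and $P$ is a full-dimensional polyhedron inside $\aff P$, cut out by the traces of (at most) the same $n$ halfspaces. Restricting everything to $\aff P$ changes neither the combinatorial structure of $P$, nor $\ell$ (restrict it), nor $P^{\ell<B}$, nor $d$, and only decreases the ambient dimension, so we may assume $P$ is full-dimensional from now on.

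\textbf{Reduction to the pointed case, and to $B$ above the minimum.} Next I would deal with pointedness. If $P$ contains a line, then in fact $P^{\ell<B}$ contains no bounded faces at all unless it is empty: any face on which $\ell$ is bounded must avoid the lineality space direction, but a standard argument shows that if $P$ is not pointed then every nonempty face of $P$ contains a translate of the lineality space, so $P^{\ell<B}$ has no vertices and the bound $O(n^d)$ (indeed $0$) holds trivially. (Alternatively one notes $\ell$ attaining a minimum on $P$ already forces $\ell$ to be constant along the lineality space, and one can quotient it out.) So assume $P$ is pointed. Since $\ell$ attains a minimum on $P$, that minimum is achieved at a vertex, call its value $m$; if $B \le m$ then $P^{\ell<B}$ is empty and there is nothing to prove, so we may assume $B > m$, matching the hypothesis of Theorem~\ref{thm:num-vertices}. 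Finally, the hypothesis that $\ell$ does not take equal values at two vertices can be arranged by an arbitrarily small perturbation of $\ell$ that does not decrease $\dim P^{\ell<B}$ (shrinking $B$ slightly if needed so that the perturbation does not sweep new faces across the threshold); since there are finitely many vertices this is harmless, and it can only keep $d$ the same or smaller, and a smaller $d$ only strengthens the bound.

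\textbf{Applying the theorem.} With all hypotheses of Theorem~\ref{thm:num-vertices} now in force, that theorem gives the bound $\binom{n}{d} - \binom{D}{d} + 1$ on the number of vertices of $P^{\ell<B}$. Since $d < D$ is fixed relative to $n$, we have $\binom{n}{d} = O(n^d)$, and dropping the nonpositive correction term $-\binom{D}{d}+1$ only weakens the bound, so the number of vertices is $O(n^d)$, as claimed. I expect the only mildly delicate point to be the perturbation arguments: one must check that perturbing $\ell$ (and $B$) to break ties among vertex values does not accidentally raise $\dim P^{\ell<B}$; this is handled by choosing the perturbation small enough and adjusting $B$ downward by a correspondingly small amount, so that the set of faces lying strictly below the threshold does not grow. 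Everything else is bookkeeping.
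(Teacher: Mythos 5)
Your proposal is correct and takes essentially the same route as the paper: restrict to the affine hull to obtain full-dimensionality, dismiss the non-pointed and $B\le\min_\ell$ cases as having no vertices at all, and then apply Theorem~\ref{thm:num-vertices}, discarding the $-\binom{D}{d}+1$ correction terms. Your extra perturbation of $\ell$ (and adjustment of $B$) to break ties among vertex values is additional care the paper does not spell out, since Theorem~\ref{thm:num-vertices} is stated without that hypothesis even though its proof inherits it from Lemma~\ref{lem:vertex-set-corr}; it is harmless and, if anything, patches a small gloss in the paper.
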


\begin{proof}
We may assume without loss of generality that $P$ is full-dimensional, for otherwise we may restrict our attention to the affine hull of $P$ and eliminate any halfspaces that contain the entire affine hull.

If $P$ is not pointed or $B$ does not exceed the minimum value of $\ell$, then $P^{\ell<B}$ has zero vertices and the result is obvious. Otherwise, we may ignore the $-\binom{D}{d}+1$ correction terms in Theorem~\ref{thm:num-vertices}, obtaining a formula which grows asymptotically as $O(n^d)$ and is independent of~$D$.
\end{proof}

Because of the upper bound theorem, these bounds can only be tight when $d\le D/2$. However, as our main interest is for bounded $d$ and unbounded $D$, this is not a significant limitation.
The formula of Theorem~\ref{thm:num-vertices} is tight when $d=0$, and the examples of polyhedra with arbitrary $D$-ary trees as their bounded subcomplex show that it is also tight for $d=1$. The example of the Voronoi diagram of points on the moment curve shows that the asymptotic bound of Corollary~\ref{cor:asymptotic-vertices} is also tight when $d=2$ and $D=4$, and more strongly that the formula of Theorem~\ref{thm:num-vertices} is tight to within a $1-O(1/n)$ factor in this case. The hypercube example $Q_D^{\sigma>D-d-1/2}$ shows that the asymptotic bound of Corollary~\ref{cor:asymptotic-vertices} is tight for any constant~$d$.

\section{Euler's formula}

Letting $\Phi(P^{\ell<B})$ denote the set of faces of $P^{\ell<B}$,
define the \emph{Euler characteristic}
$$\chi(P^{\ell<B})=\sum_{f\in \Phi(P^{\ell<B})}(-1)^{\dim f}.$$
Recall that we include the empty set as a face, as well as $P$ itself in the case that $\max_\ell P<B$. As is well known, with this definition, $\chi(P)=0$ for any convex polytope $P$; Ziegler~\cite{Zie-GTiM-95} gives a nice proof of this fact based on shelling.

As we now show, the Euler characteristic of $P^{\ell<B}$ behaves similarly to the Euler characteristic of an entire polytope. This fact will lead to improved bounds on the number of faces of  $P^{\ell<B}$, because it will imply that the $d$-dimensional faces can be charged against other faces of lower dimension. Our proof can be viewed as a dual form of the shelling proof of Euler's formula for polytopes.

\begin{theorem}
\label{thm:euler}
Let $P$ be a polytope or polyhedron, let $\ell$ be a linear function, and let $B$ be either $\infty$ or a real number. Additionally, suppose that $P^{\ell<B}$ is nonempty. Then $\chi(P^{\ell<B})=0$.
\end{theorem}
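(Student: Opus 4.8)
The plan is to mimic the shelling proof of Euler's formula for polytopes, but in dual form, by building up the complex $P^{\ell<B}$ one vertex at a time in order of increasing value of $\ell$ and tracking how $\chi$ changes. First I would reduce to the case that $P$ is a pointed, full-dimensional polyhedron and that $\ell$ takes distinct values on the vertices of $P$ (perturbing $\ell$ slightly if necessary, which does not change $P^{\ell<B}$ for generic perturbation when $B$ is not a vertex value; the boundary cases $B=\infty$ or $B$ equal to a vertex value can be handled by a limiting argument or treated directly). List the vertices $v_1, v_2, \dots$ of $P$ that lie in $P^{\ell<B}$ in increasing order of $\ell$-value, and for each threshold $t$ between consecutive vertex values let $C_t = P^{\ell<t}$. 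I would show that $C_t$ is always a nonempty polyhedral complex (it contains at least $v_1$ once $t$ exceeds $\ell(v_1)$), and compute $\chi(C_t)$ inductively.

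The heart of the argument is to understand, for each vertex $v_i$, exactly which faces are newly added to the complex as $t$ crosses $\ell(v_i)$. A face $f$ of $P$ first enters $P^{\ell<t}$ precisely when $t$ exceeds $\max_\ell f$; if $v_i = \max_\ell f$ (which, by the distinctness assumption, identifies a unique vertex responsible for each bounded face), then $f$ is added at step $i$. So the faces added at step $i$ are exactly the faces $f$ incident to $v_i$ on which $\ell$ is bounded above and attains its max at $v_i$ — equivalently, via the link $L_i$ of $v_i$, these correspond to the faces of $L_i$ lying in the "downward" region $L_i^{\ell < \ell(v_i)}$, together with $v_i$ itself. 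The contribution of these new faces to $\chi$ is then $(-1)^{0}$ for $v_i$ plus $\sum_{g} (-1)^{\dim g + 1}$ over faces $g$ of $L_i^{\ell<\ell(v_i)}$ (the $+1$ because a face $g$ of the link corresponds to a face of $P$ one dimension higher). This equals $1 - \chi(L_i^{\ell<\ell(v_i)})$.

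The key subclaim, which I expect to be the main obstacle, is that $\chi(L_i^{\ell<\ell(v_i)}) = 1$ for every $i \ge 2$, and $\chi(L_1^{\ell<\ell(v_1)}) = 1$ as well but the contribution of $v_1$ makes the total $\chi(C_t) = 1 - 1 + 1 = 1$ after the first step — wait, I must be careful: I want the final answer to be $0$, so I need to recheck the bookkeeping. Actually the cleanest route is: for $i=1$, the link region $L_1^{\ell<\ell(v_1)}$ is empty (since $v_1$ is the global $\ell$-minimum over $P^{\ell<B}$'s vertices, but $v_1$ need not be the global min of $\ell$ on $P$ — here is a subtlety I would need to resolve, possibly by noting $v_1$ could have downward faces leading outside $P^{\ell<B}$), so the first step contributes just $(-1)^0 = 1$, giving $\chi(C_{t_1}) = 1$; and for each subsequent $i$, I claim the new faces contribute $0$ net, i.e. $1 - \chi(L_i^{\ell<\ell(v_i)}) = 0$, which requires $\chi(L_i^{\ell<\ell(v_i)}) = 1$. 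This last identity should follow by induction on dimension from the theorem itself applied to the lower-dimensional polytope $L_i$ — provided $L_i^{\ell<\ell(v_i)}$ is nonempty, which is where I would invoke Lemma~\ref{lem:slice-basis} or a direct argument that a bounded polytope always has a face on the "low" side of any hyperplane through its relative interior region. Handling the degenerate cases ($L_i^{\ell<\ell(v_i)}$ empty, or the threshold $B$ itself, or $v_1$'s downward faces) carefully, and getting the induction base case ($\dim P = 0$ or $1$) right, will be the fiddly part; the conceptual skeleton is a clean dual-shelling sweep.
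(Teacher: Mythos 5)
Your sweep is the paper's own argument run in reverse: the paper inducts on the number of vertices by removing the $\ell$-maximal vertex of $P^{\ell<B}$ and, exactly as you do, identifies the faces charged to that vertex with the faces of its lower link (dimensions shifted by one), then applies induction on dimension to the link; the nonemptiness worry you raise is settled just as the paper settles it, by a downward simplex edge at any vertex that is not $\ell$-minimal on $P$. The bookkeeping you left unresolved disappears once you use the stated convention that $\chi$ counts the empty face with sign $(-1)^{-1}=-1$ (so a single vertex gives $\chi=-1+1=0$, your step-one total is $0$ rather than $1$, and the inductive fact you need about the link is $\chi(L_i^{\ell<\ell(v_i)})=0$, i.e.\ its nonempty faces alternate to $+1$, which is precisely the theorem applied to the lower-dimensional polytope $L_i$).
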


\begin{proof}
We use induction, both on the dimension of $P^{\ell<B}$ and on the number of vertices of $P^{\ell<B}$. As a base case, the result is clearly true if $P^{\ell<B}$ consists only of a single vertex: for, in that case, $F$ contains two faces, the empty set of dimension $-1$ and that vertex, and these two faces have opposite signs in the sum defining $\chi(P^{\ell<B})$, cancelling each other out.

Otherwise, $P^{\ell<B}$ has at least two vertices. We may assume without loss of generality (by perturbing $\ell$ if necessary) that no two vertices of $P^{\ell<B}$ have the same value of $\ell$. Let $v$ be the vertex of $P^{\ell<B}$ with the maximum value of $\ell$, and let $B'=\ell(v)$. Then, by induction, $\chi(P^{\ell<B'})=0$.

$P^{\ell<B}$ differs from $P^{\ell<B'}$ by the set of faces incident  to $v$. Let $L$ be the link of $v$ in $P$; then the face structure of $L^{\ell<B'}$ does not depend on the intersecting hyperplane used to define the geometry of $L$, and the faces incident to $v$ in $P^{\ell<B}$ are in one-to-one correspondence with the faces of $L^{\ell<B'}$.
For instance, the empty face of $L^{\ell<B'}$ corresponds to the vertex $v$ of $P^{\ell<B}$, and each vertex $u$ of $L^{\ell<B'}$ corresponds to an edge $uv$ of $P^{\ell<B}$. In this correspondence, the dimension of a face in $L^{\ell<B'}$ is one less than the dimension of the corresponding face of $P^{\ell<B}$. Therefore, each face of $L^{\ell<B'}$ makes a contribution to the Euler characteristic of $L^{\ell<B'}$ with the opposite sign to the contribution of the corresponding face in $P^{\ell<B}$. Therefore,  $\chi(P^{\ell<B})=\chi(P^{\ell<B'})-\chi(L^{\ell<B'})$. But $L^{\ell<B'}$ has lower dimension than $P^{\ell<B}$, and is non-empty (it includes at least one vertex corresponding to an edge that can be reached from $v$ by a single step of the simplex method), so by induction on dimension, $\chi(L^{\ell<B'})=0$. Therefore, $\chi(P^{\ell<B})=\chi(P^{\ell<B'})-\chi(L^{\ell<B'})=0-0=0$.
\end{proof}

As a simpler proof for the special case in which $P$ is a polytope and $f$ is a facet of $P$ on which $\ell=B$, let $\epsilon$ be sufficiently small that there are no vertices of $P$ for which $\ell$ is between $B$ and $B-\epsilon$, and let $\lambda$ be the hyperplane $\ell=B-\epsilon$. Then
$$\chi(P^{\ell<B})=\chi(P) + \chi(P\cap\lambda) - \chi(f)=0+0-0=0.$$

\section{Bounding the number of faces}

\begin{theorem}
Let $P$ be a polyhedron, defined as the intersection of $n$ $D$-dimensional halfspaces, let $\ell$ be a linear function which attains a minimum on $P$, let $B$ be given, and suppose that  $\dim P^{\ell<B}=d<D$. Then the number of faces of $P^{\ell<B}$ is $O(n^{d^2})$.
\end{theorem}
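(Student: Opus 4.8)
The plan is to bound the faces of $P^{\ell<B}$ one dimension at a time, counting the low-dimensional faces directly from the vertex bound of Corollary~\ref{cor:asymptotic-vertices} and then using Euler's formula (Theorem~\ref{thm:euler}) to charge the top-dimensional faces against the lower-dimensional ones, so that they never have to be counted by hand. Throughout I write $f_j$ for the number of $j$-dimensional faces of $P^{\ell<B}$, with the convention $f_{-1}=1$ for the empty face.

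First I would dispose of degenerate cases. We may assume $P$ is full-dimensional (restricting to $\aff P$ and discarding redundant halfspaces, exactly as in the proof of Corollary~\ref{cor:asymptotic-vertices}), and we may assume $B$ is finite and strictly larger than $\min_\ell P$: if $B\le\min_\ell P$ then $P^{\ell<B}$ is empty and there is nothing to prove, and if $B=\infty$ we may replace it by any finite number exceeding the value of $\ell$ on every vertex of $P$, which does not change the complex since $\ell$ is bounded below on $P$. After these reductions $P^{\ell<B}$ is a nonempty complex, every face in it is a bounded polytope, and Euler's formula applies to it.

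Next, the low-dimensional faces. By Corollary~\ref{cor:asymptotic-vertices}, $f_0=O(n^d)$. For each $j$ with $1\le j\le d-1$, a $j$-face $g$ of $P^{\ell<B}$ is a $j$-dimensional polytope, hence has at least $j+1$ affinely independent vertices, each of which is a vertex of $P^{\ell<B}$; moreover $g=P\cap\aff g$ (the supporting hyperplane defining the face $g$ contains $\aff g$), and $\aff g$ is spanned by any $j+1$ affinely independent vertices of $g$. Thus $g$ is uniquely determined by a choice of $j+1$ of the $O(n^d)$ vertices of $P^{\ell<B}$, so
\[
f_j\;\le\;\binom{O(n^d)}{j+1}\;=\;O\!\bigl(n^{d(j+1)}\bigr)\;=\;O(n^{d^2}),
\]
where the last step uses $j+1\le d$. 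Finally, Theorem~\ref{thm:euler} gives $\sum_{j=-1}^{d}(-1)^j f_j=0$; solving for $f_d$ yields $f_d=\pm\sum_{j=-1}^{d-1}(-1)^j f_j$, so that
\[
f_d\;\le\;\sum_{j=-1}^{d-1}f_j\;=\;1+O(n^d)+\sum_{j=1}^{d-1}O(n^{d^2})\;=\;O(n^{d^2}).
\]
Summing over all dimensions, the total number of faces is $\sum_{j=-1}^{d}f_j=O(n^{d^2})$.

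The step I expect to be most delicate is not any single estimate but the handling of degeneracy: one must be sure that $P^{\ell<B}$ is nonempty (so Euler's formula in the form of Theorem~\ref{thm:euler} is available) and that each of its faces is a genuine bounded polytope with its full complement of $j+1$ vertices, which is exactly what the reductions to finite $B>\min_\ell P$ (and to the bounded-subcomplex reading) secure. I would also remark that a sharper bound is available: peeling off the $\ell$-maximal vertex $v$ as in the proof of Theorem~\ref{thm:euler} gives the identity $f(P^{\ell<B})=1+\sum_{v}f(L_v^{\ell<\ell(v)})$ over vertices $v$ of $P^{\ell<B}$, where each link complex $L_v^{\ell<\ell(v)}$ has dimension at most $d-1$ and lives in a polytope with at most $n$ facets; combining this with Corollary~\ref{cor:asymptotic-vertices} and induction on $d$ yields the recursion $\Psi(d,n)\le 1+O(n^d)\,\Psi(d-1,n)$ with $\Psi(0,n)=O(1)$, hence the total face count is in fact $O(n^{d(d+1)/2})\subseteq O(n^{d^2})$.
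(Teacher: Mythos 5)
Your proof is correct and follows essentially the same route as the paper: faces of dimension at most $d-1$ are recovered as the intersection of $P$ with the affine hull of at most $d$ of the $O(n^d)$ vertices given by Corollary~\ref{cor:asymptotic-vertices}, and Theorem~\ref{thm:euler} is used to charge the $d$-dimensional faces against the lower-dimensional ones. Your closing remark about a sharper $O(n^{d(d+1)/2})$ bound via the link recursion is a nice bonus, but the core argument coincides with the paper's.
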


\begin{proof}
By Theorem~\ref{thm:euler}, the number of $d$-dimensional faces is no larger than the total number of faces of all lower dimensions. Therefore, we need only count faces of dimension up to $d-1$. But each such face is the intersection of $P$ with the affine hull of a set of at most $d$ vertices of $P^{\ell<B}$. There are $O(n^d)$ vertices, so there are $O(n^{d^2})$ sets of at most $d$ vertices, and therefore $O(n^{d^2})$ faces.
\end{proof}

For instances in general position, a considerably sharper bound may be obtained.

\begin{theorem}
\label{thm:general-pos-faces}
Let $P$ be a polyhedron, defined as the intersection of $n$ $D$-dimensional halfspaces, let $\ell$ be a linear function which attains a minimum on $P$, let $B$ be given, and suppose that  $\dim P^{\ell<B}=d<D$. Additionally, suppose that the halfspaces defining $P$ are in general position and that $P$ has $N$ vertices. Then the number of $i$-dimensional faces of $P^{\ell<B}$ is at most $N\binom{d}{i}$.
\end{theorem}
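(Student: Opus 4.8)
The plan is a charging argument, in the spirit of the shelling proof of Euler's formula (Theorem~\ref{thm:euler}): assign each face of $P^{\ell<B}$ to a single vertex, and bound the number of faces assigned to any one vertex. The leverage comes from general position: a polyhedron defined by halfspaces in general position is \emph{simple}, so the link of every vertex is a $(D-1)$-simplex, and the faces of $P$ incident to a vertex have the combinatorics of the faces of a simplex.

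I would first prepare the ground. As in the proofs of Theorems~\ref{thm:num-vertices} and~\ref{thm:euler}, we may assume $P$ is full-dimensional and pointed. Then, replacing $\ell$ by a lexicographic perturbation $\ell-\varepsilon\ell'$ (with $\varepsilon>0$ infinitesimal and $\ell'$ a generic linear function that is strictly positive on every nonzero recession direction of $P$) and, if necessary, lowering $B$ slightly so that it avoids the finitely many $\ell$-values of vertices of $P$, we may further assume of this (now perturbed) $\ell$ that it takes distinct values on the $N$ vertices of $P$ and is nonconstant along every edge (segment or ray) of $P$; using that $\ell$ attains a minimum on $P$, one checks that these modifications change neither the complex $P^{\ell<B}$ nor its dimension $d$. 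Now $\ell$ is bounded above on any face $f\in P^{\ell<B}$ (by $B$, or by definition when $B=\infty$), so the face $\max_\ell f$ is nonempty; since $P$ is pointed this face is pointed, and since $\ell$ is distinct on vertices and nonconstant on edges it must be $0$-dimensional. Thus every nonempty $f\in P^{\ell<B}$ determines a unique vertex $v(f):=\max_\ell f$, which lies in $f\subseteq P^{\ell<B}$. Charge $f$ to $v(f)$.

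Next I would count, for a fixed vertex $v$ of $P^{\ell<B}$, the $i$-faces charged to $v$. Work in the link $L$ of $v$: a $(D-1)$-simplex whose $D$ vertices correspond to the $D$ edges of $P$ at $v$, with the $(i-1)$-faces of $L$ corresponding, incidence-preservingly, to the $i$-faces of $P$ incident to $v$. Call an edge of $P$ at $v$ \emph{descending} if $\ell$ strictly decreases along it away from $v$. Since $P$ lies inside its tangent cone at $v$, which is generated by the edge directions at $v$, and $\ell$ is linear, a face $f$ incident to $v$ has $\max_\ell f=v$ if and only if every edge of $f$ at $v$ is descending; any such $f$ automatically lies in $P^{\ell<B}$ and is charged to $v$, and these are exactly the faces charged to $v$. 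In the link, these correspond to the faces of the simplex $L$ all of whose vertices are descending; in particular the descending edges at $v$ themselves span a face of $L$, hence a face $g$ of $P$ whose dimension $k_v$ equals the number of descending edges at $v$, and every edge of $g$ at $v$ descends, so $\max_\ell g=v$, whence $g\in P^{\ell<B}$ and therefore $k_v=\dim g\le d$. Consequently the $i$-faces of $P^{\ell<B}$ charged to $v$ are in bijection with the $i$-element subsets of the $k_v$ descending edges, so there are exactly $\binom{k_v}{i}\le\binom{d}{i}$ of them.

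Summing over the vertices that receive any charge---all of which are among the $N$ vertices of $P$---bounds the number of $i$-faces of $P^{\ell<B}$ by $N\binom{d}{i}$, as claimed. (The argument in fact yields the stronger bound in which $N$ is replaced by the number of vertices of $P^{\ell<B}$.) I expect the counting step to be essentially forced once $P$ is recognized as simple; the delicate part is the preliminary reduction, namely arranging that $\max_\ell f$ is a genuine vertex for \emph{every} $f\in P^{\ell<B}$---an issue precisely because, for finite $B$, $P^{\ell<B}$ may contain unbounded faces along whose recession directions the unperturbed $\ell$ is constant---while keeping the complex $P^{\ell<B}$ and its dimension $d$ intact; this is where the hypothesis that $\ell$ attains a minimum on $P$ does its work.
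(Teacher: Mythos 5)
Your proof is correct and follows essentially the same route as the paper: charge each face $f$ of $P^{\ell<B}$ to the vertex $\max_\ell f$, use general position to identify the link of each vertex with a simplex, observe that the ``descending'' part of that link has dimension at most $d-1$ because it spans a face lying in $P^{\ell<B}$, and count $\binom{d}{i}$ charged $i$-faces per vertex. Your preliminary perturbation of $\ell$ and $B$ to guarantee that $\max_\ell f$ is a single vertex is a genuine point of rigor that the paper's proof passes over silently, but it does not change the underlying argument.
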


\begin{proof}
We charge each $i$-dimensional face $f$ of $P^{\ell<B}$ to the vertex $\max_\ell f$.
Let $v$ be any vertex, and let $L$ be its link. Then the faces charged to $v$ correspond one-to-one with the faces of $L^{\ell<\ell(v)}$.
By the general position assumption, $L$ is a simplex, and therefore $L^{\ell<\ell(v)}$ is also a simplex. Because the faces of $P^{\ell<B}$ correspond to faces with dimension one less in $L^{\ell<\ell(v)}$, $\dim L^{\ell<\ell(v)}\le d-1$. Thus, the number of $i$-dimensional faces  of $P^{\ell<B}$ that are charged to $v$ is at most the number of $(i-1)$-dimensional faces of a $(d-1)$-dimensional simplex, which is $\binom{d}{i}$.
\end{proof}

\begin{corollary}
Let $P$ be a polyhedron, defined as the intersection of $n$ $D$-dimensional halfspaces, let $\ell$ be a linear function which attains a minimum on $P$, let $B$ be given, and suppose that  $\dim P^{\ell<B}=d<D$. Additionally, suppose that the halfspaces defining $P$ are in general position. Then the number of faces of $P^{\ell<B}$ is $O(n^d)$.
\end{corollary}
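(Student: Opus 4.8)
The plan is to derive this from Theorem~\ref{thm:general-pos-faces} together with Corollary~\ref{cor:asymptotic-vertices}, but using the charging argument in the proof of Theorem~\ref{thm:general-pos-faces} in a slightly sharpened form. The crucial point is that that argument assigns each face $f$ of $P^{\ell<B}$ to the vertex $\max_\ell f$, and this vertex is necessarily itself a vertex of $P^{\ell<B}$: it is a face of $f$, and $P^{\ell<B}$, being a polyhedral complex, contains every face of each of its members, so $\ell$ is strictly below $B$ on it. Hence no vertex outside $P^{\ell<B}$ ever receives charge, and the quantity $N$ appearing in Theorem~\ref{thm:general-pos-faces} may be replaced by the number of vertices of $P^{\ell<B}$, which is $O(n^d)$ by Corollary~\ref{cor:asymptotic-vertices}.

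Next I would sum the per-dimension estimate over all dimensions. By (the sharpened form of) Theorem~\ref{thm:general-pos-faces}, a fixed vertex $v$ of $P^{\ell<B}$ is charged at most $\binom{d}{i}$ faces of dimension $i$ — the faces charged to $v$ correspond, with dimension dropping by one, to faces of the subcomplex $L^{\ell<\ell(v)}$ of the simplex link $L$, and this subcomplex has dimension at most $d-1$. Summing over $i$ from $0$ to $d$ gives at most $\sum_{i=0}^{d}\binom{d}{i}=2^{d}$ faces charged to $v$. Multiplying by the $O(n^d)$ bound on the number of vertices of $P^{\ell<B}$, and adding $O(1)$ for the empty face (and noting that, after reducing to the full-dimensional case, $P$ itself is not in $P^{\ell<B}$ since $\dim P^{\ell<B}=d<D$), yields $O(2^{d}n^{d})$ faces in total. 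Treating $d$ as a parameter whose contribution — here the factor $2^{d}$, and in Corollary~\ref{cor:asymptotic-vertices} the factor $1/d!$ implicit in $\binom{n}{d}$ — is independent of $n$ and of $D$, this is $O(n^{d})$, as claimed.

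The residual degenerate cases are handled exactly as in the proof of Corollary~\ref{cor:asymptotic-vertices}: we may assume $P$ is full-dimensional by restricting to $\aff P$ and discarding redundant halfspaces; and if $P$ is not pointed, or if $B$ does not exceed the minimum value of $\ell$ on $P$, then $P^{\ell<B}$ has no vertices, hence (by the charging argument) no faces other than the empty set, so the bound holds trivially.

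I expect the only real subtlety — and it is a minor one — to be the first step. Naively applying Theorem~\ref{thm:general-pos-faces} with $N$ equal to the number of vertices of the whole polyhedron $P$ gives only $O(n^{\lfloor D/2\rfloor})$, which is far weaker than $O(n^d)$; one must notice that the charging argument never touches a vertex lying outside $P^{\ell<B}$ and can therefore be coupled with the much stronger vertex bound of Corollary~\ref{cor:asymptotic-vertices}. Everything after that is a one-line summation of binomial coefficients.
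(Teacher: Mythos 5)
Your proposal is correct and follows essentially the same route as the paper: combine the vertex bound ($\binom{n}{d}$ from Theorem~\ref{thm:num-vertices}, equivalently the $O(n^d)$ of Corollary~\ref{cor:asymptotic-vertices}) with the per-vertex charge of at most $\binom{d}{i}$ faces of dimension $i$ from Theorem~\ref{thm:general-pos-faces}, and sum to get $2^d\cdot O(n^d)=O(n^d)$. The ``sharpening'' you flag --- that the charging in Theorem~\ref{thm:general-pos-faces} only ever touches vertices of $P^{\ell<B}$, so $N$ may be taken to be the number of such vertices rather than the number of vertices of all of $P$ --- is exactly what the paper's one-line proof uses implicitly when it multiplies by $\binom{n}{d}$, so making it explicit is a correct reading rather than a new argument.
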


\begin{proof}
By Theorems~\ref{thm:num-vertices} and~\ref{thm:general-pos-faces}, adding the bounds on the the number of faces of $P^{\ell<B}$ over all dimensions up to $d$ gives a total that is less than $\binom{n}{d}2^d<n^d\frac{2^d}{d!}=O(n^d)$.
\end{proof}

\section{Algorithms}

We now discuss algorithms for finding the bounded subcomplex of a polyhedron. There are many alternative solutions available, depending on whether the dimension $d$ of the subcomplex is known or unknown to the algorithm, on whether only the vertices of the polyhedron need to be found or whether the whole bounded subcomplex is to be constructed, and on whether we use as a subroutine the algorithm for computing bounded subcomplexes of Herrmann et al.~\cite{HerJosPfe-10} or whether we directly enumerate the faces of the bounded subcomplex, using linear programming to test whether each face is bounded.

Our algorithms will necessarily involve the solution of linear programs and linear feasability problems. Especially in the case of inputs that are not in general position, it is important that these linear programs are solved in an exact model of computation that allows solution vertices to be compared for equality: it is possible that a small perturbation of an input problem, as might occur due to round-off error in a non-exact numerical linear programming algorithm, could significantly increase the dimension of the bounded subcomplex. However, this sort of exact computation model is standard in computational geometry algorithms, and solutions are available when the ambient dimension $D$ is of moderate size~\cite{MatShaWel-Algo-96}. Even when $D$ is large, strongly polynomial algorithms are known for some special cases of linear programming, such as the case with two variables per inequality that arises in the tight span construction~\cite{Meg-SJC-83}. We let $L$ denote the time to solve a linear program of ambient dimension $D$ and $n$ constraints, or to determine whether such a program is infeasible or unbounded.

When the input consists of a set of $n$ halfspaces together with the dimension $d$ of the bounded subcomplex, we have the following results:
\begin{itemize}
\item We can construct all vertices of the given polyhedron in time $O(n^d L)$. This follows immediately from Lemma~\ref{lem:vertex-set-corr} which gives a formula $v=\min_\ell(P\cap A)$ allowing each vertex $v$ of $P$ to be obtained as the solution to a linear program in a subspace of dimension $D-d$.
\item If the input is in general position, we can construct the bounded subcomplex in time $O(n^d L + n^{5d})$. This method generates the vertices as above, exhaustively tests each vertex-facet pair to find all the vertex-facet incidences, and then uses the algorithm of Herrmann et al.~\cite{HerJosPfe-10} to construct the bounded subcomplex from the vertex-facet incidences. The algorithm of Herrmann et al. takes time $O(N^2\beta\phi + N^3\phi^2)$, where $N$ is the number of vertices of the given polyhedron (here at most $O(n^d)$), $\beta$ is the number of vertex-facet incidences (at most $O(n^{d+1})$, and $\phi$ is the number of faces of the bounded subcomplex (again, at most $O(n^d)$).
\item If the input is not in general position, the same method takes time $O(n^d L + n^{2d^2+3d})$.
\item If the input is in general position, we can alternatively construct the bounded subcomplex in time $O(n^{2d}L)$. The method is to maintain a list of faces of the bounded subcomplex, initially containing all the vertices. Then, for each face $f$ added to the list, and each vertex $v$, we compute the affine hull of $f\cup v$, use linear programming subproblems to determine whether this affine hull lies on the boundary of the polyhedron and has a vertex of maximum value, and if so check that the face contained in this affine hull is distinct from the ones already discovered. Whenever we find a new face in this way we add it to the list. There are $O(n^{2d})$ face-vertex pairs, each taking time $O(L)$ to check, so the time bound is as given.
\item If the input is not in general position, the same method takes time $O(n^{d^2+d}L)$.
\end{itemize}

When $d$ is not given as input (as seems more likely to occur in most applications of this problem), constructing the bounded subcomplex becomes more complicated. It is possible to interleave the construction of vertices (as the solutions of linear programming subproblems with decreasing dimension) with the construction of faces of increasing dimension, but how can we tell when we have found everything? An answer is provided by the following lemma, 
\new {which we illustrate in Figure~\ref {fig:nogap}.}

\eenplaatje[scale=1.5] {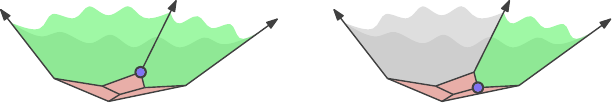} {Lemma~\ref {lem:nogap} states that if there exists a vertex of $P$ that is the minimum on the intersection of $d$ hyperplanes but no set of less than $d$ hyperplanes, then there also exists a vertex of $P$ that is the minimum on the intersection of $d-1$ hyperplanes but no set of less than $d-1$ hyperplanes. In the figure, $d=2$.}

\begin{lemma} \label {lem:nogap}
Let $H$ be a given set of halfspaces, let $P$ be \new {the} $D$-dimensional polyhedron formed as the intersection of the halfspaces in $H$, let $d\le D$ be a number, and let $\ell$ be a linear function.
For any given set $S$ of halfspaces in $H$, let $P_S$ denote the intersection of $P$ with the hyperplanes bounding \new {the} halfspaces in $S$.
Let $V$ be the set of all vertices of $P$ of the form $\min P_S$ for sets $S$ of at most $d$ halfspaces in $H$, and $F$ be a set of bounded faces of $P$ that includes the empty face. Suppose additionally that the following three conditions are all true:
\begin{enumerate}
\item For each set $S$ of at most $d+1$ halfspaces in $H$, $\min_\ell P_S$ belongs to $V$.
\item Every face in $F$ has dimension at most $d$.
\item For every pair $(f,v)$ of a face $f$ in $F$ and a vertex $v$ in $V$, with $A$ being the affine hull of $f\cup v$, one of the following three possibilities is true:
\begin{enumerate}
\item $A$ contains an interior point of $P$ and $d<D-1$,
\item Some face $f'\in F$ has affine hull $A$, or
\item $\ell$ is unbounded on $A\cap P$.
\end{enumerate}
\end{enumerate}
Then $F$ is the bounded subcomplex of the given polyhedron.
\end{lemma}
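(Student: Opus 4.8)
The goal is to show that every bounded face of $P$ lies in $F$; together with the hypothesis that $F$ consists of bounded faces, this gives that $F$ is the bounded subcomplex. I work under the standing assumptions of this section: $P$ is pointed, $\ell$ attains its minimum on $P$, and $\ell$ has been perturbed to be generic — distinct values at vertices and nonzero on every recession direction of every face. Genericity makes $\min_\ell f$ a single vertex for every nonempty face $f$, makes ``$f$ bounded'' equivalent to ``$\ell$ bounded on $f$'', and — a fact I use repeatedly — makes every face $h$ that has a vertex as its $\ell$-maximum bounded: a recession direction $r$ of $h$ is one of $P$, so $\ell(r)\ge 0$ (as $\ell$ is bounded below) and $\ell(r)\ne 0$ (genericity), contradicting maximality.

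\textbf{Step 1 (the ``no gap'', using condition~1).} For a vertex $v$ let $\nu(v)$ be the least $|S|$ over sets $S$ of halfspaces with $v=\min_\ell P_S$; this is finite ($v=\min_\ell P_T$ for $T$ the set of halfspaces tight at $v$), and $v\in V\iff\nu(v)\le d$. The key assertion is that $\{\nu(v):v\text{ a vertex of }P\}$ is an \emph{initial segment} $\{0,1,\dots,N\}$ of the integers; it contains $0$ because $\nu(\min_\ell P)=0$. Equivalently: if some vertex has $\nu$-value $j\ge 1$ then some vertex has $\nu$-value $j-1$ (this is the statement illustrated by the figure). Granting this, condition~1 says that any vertex of the form $\min_\ell P_S$ with $|S|\le d+1$ has $\nu\le d$; in particular no vertex has $\nu=d+1$ (such a vertex would, from a minimal witness, be $\min_\ell P_S$ with $|S|=d+1$), so by the initial-segment property $N\le d$ and every vertex lies in $V$. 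To prove the initial-segment property, take $v$ with $\nu(v)=j\ge 1$ and a minimal witness $S$; minimality makes $S$ a basis, so $P_S$ is a codimension-$j$ face on which $v$ is the \emph{strict} minimum (hence a vertex of $P_S$), and for each $H\in S$ the face $P_{S\setminus\{H\}}$ has one more dimension, its $\ell$-minimum $u_H$ satisfies $\ell(u_H)<\ell(v)$ and $\nu(u_H)\le j-1$. One then argues — applying Lemma~\ref{lem:slice-basis} inside the link of $v$ in a suitable $P_{S\setminus\{H\}}$, splitting that link by the hyperplane $\ell=\ell(v)$ and matching the resulting complementary pair of faces (one of them a bounded ``downhill'' face at $v$, as above) against faces of $P$ — that one of these vertices realizes $\nu$-value exactly $j-1$. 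I expect carrying out this descent to be the main obstacle; everything after Step~1 is bookkeeping with condition~3.

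\textbf{Step 2 ($V\subseteq F$).} For $v\in V$, apply condition~3 to the pair $(\emptyset,v)$ (valid since $\emptyset\in F$). Here $A=\aff\{v\}=\{v\}$ contains no interior point of $P$ and $\ell$ is trivially bounded on it, so options~(a) and~(c) fail and option~(b) yields $f'\in F$ with $\aff f'=\{v\}$, i.e.\ $\{v\}\in F$. Combined with Step~1, every vertex of $P$ lies in $F$.

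\textbf{Step 3 (induction on dimension).} Suppose some bounded face of $P$ is not in $F$ and let $g$ be one of minimum dimension $m$. Since $\emptyset\in F$ and every vertex is in $F$, $m\ge 1$. Also $g\ne P$: otherwise $P$ is bounded of dimension $D$, and for a facet $g'$ of $P$ avoiding a vertex $v$ (both in $F$) one has $\aff(g'\cup v)=$ the whole space, for which option~(a) fails (condition~2 forces $d\ge\dim g'=D-1$, so ``$d<D-1$'' is false), option~(b) fails (it would need a dimension-$D$ face in $F$), and option~(c) fails ($\ell$ is bounded on bounded $P$), contradicting condition~3. So $\dim g\le D-1$. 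Pick any vertex $v$ of $g$ (hence $v\in V$ by Step~1) and a facet $g'$ of $g$ with $v\notin g'$ — such $g'$ exists because no point of a positive-dimensional polytope lies on all of its facets (otherwise the polytope would be a bounded cone, hence a point). Then $g'$ is a bounded face of dimension $m-1<m$, so $g'\in F$, and $\aff(g'\cup v)=\aff g$ (indeed $v\notin\aff g'$, since $\aff g'\cap g=g'$). Apply condition~3 to $(g',v)$: option~(a) fails because $\aff g$ is a proper affine subspace ($\dim g\le D-1$) and so contains no interior point of $P$; option~(c) fails because $\aff g\cap P=g$ (take a supporting hyperplane $h$ with $h\cap P=g$; then $\aff g\subseteq h$, so $\aff g\cap P\subseteq h\cap P=g$), which is bounded. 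Hence option~(b) holds: some $f'\in F$ has $\aff f'=\aff g$; then $f'\subseteq\aff f'\cap P=\aff g\cap P=g$ and $\dim f'=\dim g$, so $f'=g$ and $g\in F$ — a contradiction. Therefore every bounded face of $P$ lies in $F$, and $F$ is the bounded subcomplex.
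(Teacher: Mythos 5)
Your Steps 2 and 3 are sound and are essentially the paper's use of condition (3): once every vertex of $P$ is known to lie in $V$ (and hence, via the pair $(\emptyset,v)$, in $F$), an induction on dimension with the pairs $(g',v)$, $A=\aff g$, and the observation $\aff g\cap P=g$ forces every bounded face into $F$, with the case $g=P$ handled separately. The problem is Step 1, which you yourself flag: the ``initial segment'' (no-gap) property of the quantities $\nu(v)$ is asserted, sketched in one sentence, and then deferred (``I expect carrying out this descent to be the main obstacle''). That property is not bookkeeping --- it is the entire content of the lemma, namely the reason that condition (1), which only inspects sets of at most $d+1$ halfspaces, certifies that no vertex needs more than $d$ halfspaces. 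Your sketch does not obviously close: from a minimal witness $S$ with $|S|=j$ you only get vertices $u_H=\min_\ell P_{S\setminus\{H\}}$ with $\nu(u_H)\le j-1$, and nothing in the outlined link-slicing argument shows that one of them attains $\nu$ \emph{exactly} $j-1$ rather than something smaller; as stated, your claim is an unconditional structural fact about pointed polyhedra that is stronger than what the lemma needs and that you have not established. The paper never proves this global no-gap statement. Instead it argues by contradiction on a bounded face $f\notin F$ of \emph{minimum dimension}, sets $v=\max_\ell f$, and controls the lower link of this one vertex: condition (2) guarantees that every face of dimension exceeding $d$ is automatically missing from $F$, which lets the proof descend inside the lower link of $v$ until its dimension is exactly $d$, at which point the argument of Lemma~\ref{lem:vertex-set-corr} expresses $v$ as $\min_\ell P_S$ with $|S|\le d+1$ and conditions (1) and (3) yield the contradiction. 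In other words, the descent you need is carried out locally at one vertex and is powered by conditions (1)--(2), not by a free-standing combinatorial fact about $\nu$-values; supplying a proof of your Step 1 would require essentially this machinery, and without it the proposal is incomplete.

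A secondary point: your standing assumption that $\ell$ ``has been perturbed to be generic'' is not clearly legitimate here, because conditions (1) and (3) are hypotheses about the given $\ell$ (via $\min_\ell P_S$ and the unboundedness of $\ell$ on $A\cap P$) and need not survive a perturbation; any genericity you use should either be derived from the stated hypotheses or imposed on the algorithm that invokes the lemma, not silently assumed inside its proof.
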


\begin{proof}
We assume for a contradiction that there is a bounded face $f$ that does not belong to $F$; among all such faces, let $f$ have the minimum possible dimension, and let $v=\max_\ell f$. If $f$ had dimension at most $d$, then by Lemma~\ref{lem:vertex-set-corr}, $v$ would belong to $V$, and the pair $(f',v)$ would violate condition (3) of the lemma where $f'$ is any facet of $f$ nonincident to $v$. We can assume without loss of generality that the lower link of $v$ (the subset of the link consisting of the faces whose values in $\ell$ are entirely less than that of the vertex itself) has dimension exactly $d$: it has dimension at least $d$, because it contains a face of dimension $\dim f - 1$ corresponding to $f$, and if it had a higher dimensional face then we could reduce the dimension by one unit per step by moving from $f$ to a facet of this higher dimensional face; condition (2) of the lemma ensures that, at each step, we continue to have a face $f$ that does not belong to $F$. But, if the lower link of $v$ has dimension exactly $d$, then by Lemma~\ref{lem:vertex-set-corr}, $v$ would be representable as $\min_\ell P_S$ for a set $S$ of at most $d+1$ halfspaces of $H$, violating condition (1) of the lemma. This contradiction shows that no such face $f$ exists.
\end{proof}

\begin{theorem}
Given a polyhedron whose bounded complex has dimension $d$ (with $d$ unknown to the algorithm) we can construct the bounded complex in time $O(n^{d^2+d}L)$. If the input is in general position, we can construct the bounded complex in time $O(n^{2d}L)$.
\end{theorem}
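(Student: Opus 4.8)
The plan is to run the candidate dimension $d'=1,2,3,\dots$ through an iterative-deepening loop, using Lemma~\ref{lem:nogap} as the test for whether $d'$ has reached the true value $d$ (the cases $d\le 0$, where the polyhedron has at most one vertex, being trivial). First I would dispose of degeneracies: if $P$ is not pointed its only bounded face is the empty set; otherwise I restrict to the affine hull of $P$ to make it full-dimensional, and I replace $\ell$ by a generic linear function that is positive on every nonzero recession direction of $P$ and distinct on the vertices of $P$. This does not change the bounded subcomplex, it makes $\ell$ attain its minimum at a vertex, and it makes ``$\ell$ is bounded on a face'' equivalent to ``the face is bounded'', so that $P^{\ell<\infty}$ equals the bounded subcomplex, a complex of dimension $d$.

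For each candidate $d'$ the algorithm performs three steps. (i) For every subset $S$ of at most $d'$ of the $n$ halfspaces, solve the linear program computing $\min_\ell P_S$ inside the affine subspace cut out by the bounding hyperplanes of $S$; collecting those that turn out to be vertices yields the set $V$ of Lemma~\ref{lem:nogap} in time $O(n^{d'}L)$. (ii) Build a set $F$ of bounded faces by the face-closure routine described earlier: initialize $F$ with the empty face and the vertices in $V$, and repeatedly, for each pair $(f,v)$ with $f\in F$ and $v\in V$, form $A=\aff(f\cup v)$ and use linear programming to test whether $A$ avoids the interior of $P$ and whether $\ell$ is bounded on $P\cap A$; if both hold, add the face $P\cap A$ to $F$ unless it is already present. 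Since $\ell$ is generic, every face added in this way is genuinely bounded, so $F$ remains a subcomplex of the finite bounded subcomplex and the closure terminates; the number of face--vertex pairs examined is $O(|F|\cdot|V|)$. (iii) Check the three conditions of Lemma~\ref{lem:nogap}: condition (1) by computing $\min_\ell P_S$ for all $O(n^{d'+1})$ sets $S$ of at most $d'+1$ halfspaces and confirming each resulting vertex lies in $V$; conditions (2) and (3) as byproducts of the closure routine, since the three alternatives in (3) are exactly an interior-point feasibility test, a lookup in $F$, and an unboundedness test. If all three conditions hold, output $F$ and stop; otherwise increment $d'$.

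For correctness there are two things to verify. The algorithm cannot stop with $d'<d$: if the three conditions held, Lemma~\ref{lem:nogap} would make $F$ the bounded subcomplex, but condition (2) caps $\dim F$ at $d'<d$, a contradiction. And the algorithm does stop at $d'=d$, because all three conditions then hold for the constructed $V$ and $F$. Condition (2) holds since $F$ is contained in the dimension-$d$ bounded subcomplex. Condition (1) holds by Lemma~\ref{lem:vertex-set-corr}: any $\min_\ell P_S$ with $|S|\le d+1$, when it is a vertex, is a vertex of $P^{\ell<\infty}$, hence (since $\dim P^{\ell<\infty}=d$) is also of the form $\min_\ell P_{S'}$ for some $S'$ with $|S'|\le d$, so it belongs to $V$. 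Condition (3) holds because the closure routine was run to a fixed point: for any pair $(f,v)$ with $A=\aff(f\cup v)$, either $A$ meets the interior of $P$ (case (a), available when $d<D-1$; when $d\ge D-1$ the bounded subcomplex can instead be read off from an ordinary convex-hull computation), or $\ell$ is unbounded on $P\cap A$ (case (c)), or $P\cap A$ is a bounded face whose affine hull is $A$ and which was therefore inserted into $F$ (case (b)). So the loop halts exactly at $d'=d$ and, by Lemma~\ref{lem:nogap}, returns the bounded subcomplex.

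For the running time, the geometric growth in $d'$ makes the cost dominated by the last pass, $d'=d$, with $d$ treated as a constant. There, step (i) costs $O(n^dL)$ and the condition-(1) check in step (iii) costs $O(n^{d+1}L)$. Step (ii) processes $O(|F|\cdot|V|)$ pairs at cost $O(L)$ each; since $F$ never leaves the bounded subcomplex, $|F|=O(n^{d^2})$ by our bound on the number of bounded faces and $|V|=O(n^d)$ (there are only that many subsets $S$), for a total of $O(n^{d^2+d}L)$, which dominates the other terms. When the halfspaces are in general position the bounded subcomplex has only $O(n^d)$ faces, so $|F|=O(n^d)$ and the number of processed pairs drops to $O(n^{2d})$, giving total time $O(n^{2d}L)$. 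I expect the main obstacle to be making the $d'=d$ half of the correctness argument airtight, in particular condition (3): showing that the closure routine, fed only the vertices reachable through at most $d$ hyperplanes, nevertheless discovers every bounded face of dimension up to $d$ and saturates the affine-hull test over all face--vertex pairs, together with the routine but unavoidable bookkeeping for the non-pointed, non-full-dimensional, and $d\ge D-1$ cases.
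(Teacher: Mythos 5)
Your proposal follows the paper's proof of this theorem essentially verbatim: iterate candidate dimensions $d'$ upward, run the known-$d$ construction (vertex enumeration over subsets of at most $d'$ halfspaces plus the affine-hull closure over face--vertex pairs), and halt when the conditions of Lemma~\ref{lem:nogap} are verified, with the cost dominated by the final pass at $d'=d$, giving $O(n^{d^2+d}L)$ in general and $O(n^{2d}L)$ in general position. In fact you provide considerably more detail than the paper's two-line argument (degeneracy handling, both directions of the termination argument, the $d\ge D-1$ caveat), and the soft spot you yourself flag---verifying that condition (3) of Lemma~\ref{lem:nogap} really is met once $F$ is the full bounded subcomplex---is a point the paper glosses over as well.
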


\begin{proof}
We apply the algorithm for known $d$, for increasing values of $d$, until the condition of the lemma is met. Testing the condition takes time $O(n^{d^2+d}L)$, or $O(n^{2d}L)$ for inputs in general position, matching the time for the known-$d$ algorithm on the correct value of $d$.
\end{proof}

It would be of interest to determine whether the method of Herrmann et al.~\cite{HerJosPfe-10}, which uses many fewer linear programs at the expense of greater running time in the other parts of the algorithm, can be adapted to the case where $d$ is unknown. In all cases the solution is found within an amount of time and a number of linear programming subproblems that is polynomial for any fixed value of~$d$. 

\section{Discussion}

We have shown that, for any fixed $d$, bounded subcomplexes of polyhedra that have dimension at most $d$ have polynomial complexity. Our bounds on the numbers of vertices of the bounded subcomplex are tight for small values of $d$. However, our bounds on the numbers of faces of higher dimensions do not appear to be tight. In other bounds on the complexity of polytopes, it is possible to considerably simplify the problem by assuming that the input is in general position; for our problem, too, such an assumption would be very helpful (it would lower the exponent of the polynomial from $d^2$ to $d$), but it is not always possible to perturb the input into general position without changing its dimension. Nevertheless, we would like to know whether instances that are not in general position have the same complexity as, or higher complexity than, instances that are in general position.

For the problem of constructing tight spans, our results are even more unsatisfactory, because the number of halfspaces is already quadratic in the number of points of the input metric space. For instance, when $d=2$, we get a bound of $O(n^8)$ on the complexity of the tight span of an $n$-point metric space. This stands in contrast with the bounds from another paper of $O(n)$ on the combinatorial complexity of the tight span and $O(n^2)$ time to construct it (optimal since the input distance matrix has size $O(n^2)$) under a stronger two-dimensionality assumption, that the tight span is homeomorphic to a subset of the plane~\cite{Epp-09}. Obtaining tighter bounds for $d$-dimensional tight spans, or even for the case $d=2$ without the requirement that the tight span form a planar set, would be of interest.

\subsection*{Acknowledgements}

This work was supported in part by NSF grant
0830403 and by the Office of Naval Research under grant
N00014-08-1-1015.

{ \raggedright
  \bibliographystyle{abuser}
  \bibliography{bounded-faces}
  \balance
}

\end {document}